\RequirePackage{silence}
\documentclass{article}
\usepackage[preprint]{neurips_2026}

\usepackage[utf8]{inputenc} 
\usepackage[T1]{fontenc}    
\usepackage{hyperref}       
\usepackage{url}            
\usepackage{booktabs}       
\usepackage{amsfonts}       
\usepackage{nicefrac}       
\usepackage{microtype}      
\usepackage{xcolor}        

\usepackage{amsmath}
\usepackage{amsthm}
\usepackage{amssymb}
\usepackage{tikz}
\usepackage{subcaption}
\newtheorem{theorem}{Theorem}
\newtheorem{lemma}[theorem]{Lemma}
\newtheorem{definition}[theorem]{Definition}
\newtheorem{corollary}[theorem]{Corollary}
\newtheorem{proposition}[theorem]{Proposition}

\newcommand{\metric}[0]{NMD}
\newcommand{\longmetric}[0]{Neural Manifold Diversity}

\title{
Events as Triggers for Behavioral Diversity in Multi-Agent Reinforcement Learning
}

\author{%
  Hannes B\"uchi \quad Manon Flageat\thanks{Equal contribution.} \quad Eduardo Sebasti\'an\footnotemark[1] \quad Amanda Prorok \\
  Department of Computer Science and Technology, University of Cambridge, UK \\
  \texttt{\{hmb71, mf873, es2121, asp45\}@cam.ac.uk}
}

\begin{document}

\maketitle

\begin{abstract}
Effective multi-agent cooperation requires agents to adopt diverse behaviors as task conditions evolve—and to do so at the right moment. Yet, current Multi-Agent Reinforcement Learning (MARL) frameworks that facilitate this diversity are still limited by the fact that they bind fixed behaviors to fixed agent identities. Consequently, they are ill-equipped for tasks where agents need to take on different roles at very specific moments in time. We argue that, to define these behavioral transitions, the missing ingredient is \textbf{events}. Events are changes in the state of the system that induce qualitative changes in the task. Based on this view, we introduce a framework that decouples agent identity from behavior, capturing a continuous manifold from which agents instantiate their behaviors in response to events. This framework is based on two elements. First, to build an expressive behavior manifold, we introduce Neural Manifold Diversity (NMD), a formal distance metric that remains well-defined when behaviors are transient and agent-agnostic. Second, we use an event-based hypernetwork that generates Low-Rank Adaptation (LoRA) modules over a shared team policy, enabling on-the-fly agent-policy reconfiguration in response to events. We prove that this construction ensures that diversity does not interfere with reward maximization by design. Empirical results demonstrate that our framework outperforms established baselines across benchmarks while exhibiting zero-shot generalization, and being the only method that solves tasks requiring sequential behavior reassignment.

\end{abstract}

\section{Introduction}
\label{sec:introduction}

\begin{figure}[htbp]
    \centering
    \includegraphics[width=0.95\linewidth]{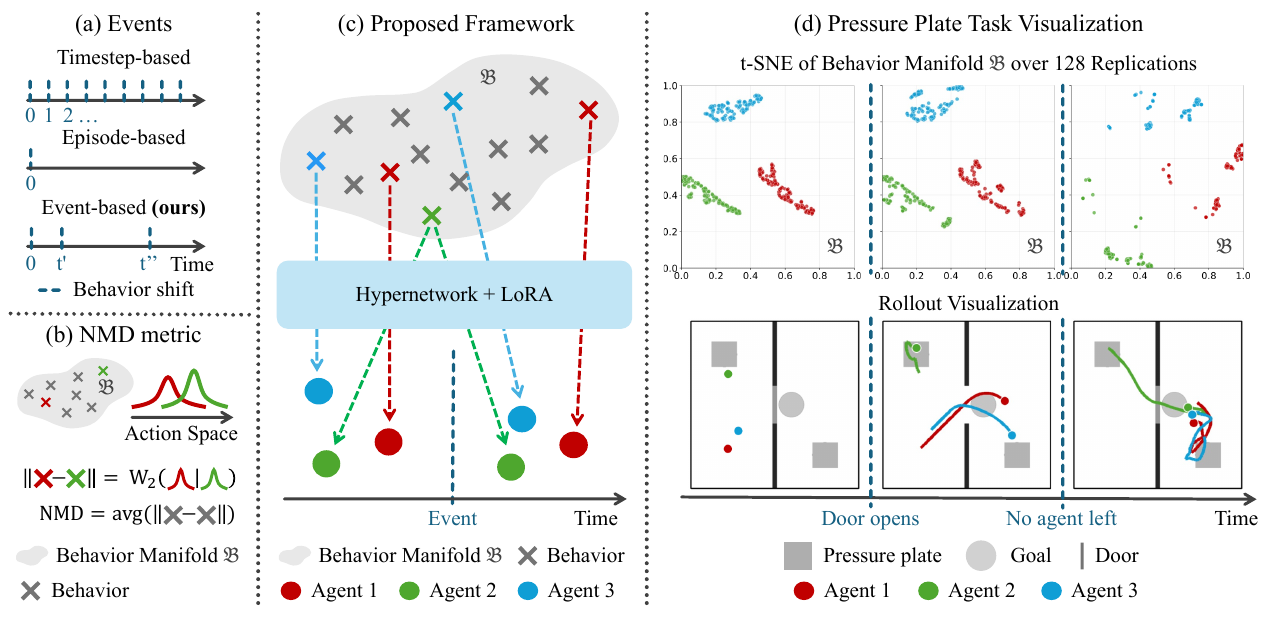}
    \caption{
    \textbf{Proposed Framework.} \textbf{(a)} Rather than shifting behaviors on a fixed timestep or episode schedule, our framework triggers behavioral transitions in response to task events. We realize this through two components: \textbf{(b)} \metric{}, a diversity metric defined directly on the behavior manifold and independent of which agent executes which behavior; and \textbf{(c)} an event-driven hypernetwork that generates LoRA~\citep{hu2021lora} modules over a shared policy to allocate behaviors from the manifold. \textbf{(d)} We visualize the behavior manifold via a two-dimensional t-SNE projection across $128$ replications of the Pressure Plate environment, alongside agent trajectories for one rollout.
    }
    \label{fig:main}
\end{figure}

Effective cooperation in multi-agent systems hinges less on agent identity than on agent behavior at any given moment. As tasks unfold, optimal team performance demands that agents adapt their behaviors dynamically, not uniformly across the team nor on a synchronized schedule, but at the right moment and under the right circumstances. 
When a quadrotor in a search team loses power, the remaining agents expand their coverage to compensate; or a defender in football pushes forward only once possession changes to the other team. 
These transitions are not arbitrary but triggered by events. 
Events are changes in the system's state---e.g., team composition and environment configuration---that induce qualitative changes in the task and demand a behavioral response. 

Behavioral diversity, or behavioral heterogeneity, is a key driver of performance in cooperative Multi-Agent Reinforcement Learning (MARL)~\citep{li2021celebrating}, and recent approaches have proposed various ways to introduce it within agent teams~\citep{bettini2023heterogeneous}.
Yet few existing approaches capture dynamic behavior adaptation. Most either assign behaviors statically, fixing each agent's role for the entire episode, or periodically, switching every few time steps regardless of context. 
We argue that this gap reflects a deeper conceptual issue. Whether diversity is enforced via unique per-agent heterogeneous components in the policy~\citep{bettini2024controlling}, agent-specific weights generated by a hypernetwork~\citep{Tessera2024}, or partitioned action spaces~\citep{wang2020rode}, existing approaches share the same implicit assumption: behaviors are tied to specific agents.

We challenge this assumption. We view behaviors as a property of the task---a continuous manifold the team draws from---whose traversal is dictated by events rather than by agent identity. Decoupling behavior and identity reframes the central question of diversity in MARL: rather than asking how to differentiate agents, we ask how to populate, structure and navigate the behavior manifold.

This brings three requirements. First, to ensure the manifold is genuinely diverse, we need a metric defined directly on behaviors rather than on agents. We introduce \longmetric{} (\metric{}), which measures diversity in the behavior manifold directly, independent of the agents currently executing those behaviors. Second, we need an architecture that can both learn such a manifold and adapt agents as events unfold. We achieve this through an event-based hypernetwork that generates Low-Rank Adaptation (LoRA)~\citep{hu2021lora} modules over a shared team policy, a computationally efficient alternative to full policy regeneration. 
Third, we need an optimization procedure that jointly maximizes team reward while preserving manifold diversity. We adopt a diversity-constrained objective~\citep{bettini2024controlling}, and prove that this construction does not interfere with reward maximization by design. 
Fig.~\ref{fig:main} illustrates our framework.
In summary, our contributions are:
\begin{itemize}
    \item We introduce an event-centric paradigm in which agents draw their behaviors dynamically from a continuous manifold, adapting in response to events.
    \item We propose a behavior-centric metric, \metric{}, that models diversity directly within this manifold, independent of which agent executes which behavior.
    \item We realize the framework through an event-driven hypernetwork generating LoRA modules. We prove that this architecture aligns task-objective gradients with \metric{} constraints through diversity control, guaranteeing joint reward maximization and diversity preservation.
\end{itemize}

\section{Related Work}\label{sec:related}

A variety of MARL approaches investigate the allocation of diverse behaviors (also frequently referred to as ``roles'') within a team. We review the most closely related work along three axes, and refer the reader to Appendix~\ref{app:relatedwork} for an extended discussion.

\textbf{Temporal Behavioral Assignment.}
A core question for behavior allocation is the assignment timeline. Existing approaches fall into two categories: methods that fix one behavior per agent for the entire episode~\citep{Tessera2024, bettini2024controlling, jiang2021emergence, li2021celebrating}, and methods that adapt behaviors on a fixed temporal schedule regardless of task conditions~\citep{Fu2025, qi2026rois, mahajan2019maven, wang2020roma, goel2025r3dm}. We advocate for a third category: event-based allocation, in which behavioral shifts are triggered by state changes rather than fixed schedules or episode boundaries.

\textbf{Representations of Behaviors.}
Behavioral specialization is fundamentally constrained by the underlying behavior representation. 
Existing approaches include learning explicit per-behavior policies or policy deviations~\citep{jiang2021emergence, li2021celebrating, bettini2024controlling, christianos2021seps, kim2023networkpruning, li2024adaptive}, conditioning a shared policy on agent identity or capabilities~\citep{Tessera2024, Fu2025, deka2021natural}, and learning a latent behavior space to condition a shared policy or hypernetwork~\citep{mahajan2019maven, wang2020roma}.
Our framework builds on similar representations but conditions behavior generation on task events, enabling dynamic allocation that prior approaches cannot support.

\textbf{Quantifying and Enforcing Behavioral Diversity.}
Existing methods to encourage diversity fall into two categories: auxiliary rewards or losses~\citep{wang2020roma, li2021celebrating, mahajan2019maven, jiang2021emergence}, which provide no formal guarantee on final team diversity; and hard constraints~\citep{bettini2024controlling, tan2023policy}, which provide stronger guarantees but require a target diversity value a priori. Diversity is commonly quantified via information-theoretic objectives~\citep{wang2020roma, li2021celebrating, mahajan2019maven, jiang2021emergence} or explicit metrics such as System Neural Diversity (SND)~\citep{bettini2023system}.
We propose \metric{}, a new metric inspired by SND that remains well-defined under dynamic behavior allocation, and use it within a constrained optimization framework to enforce target diversity levels.

\section{Background}
\label{sec:background}

\textbf{Cooperative Multi-Agent Reinforcement Learning.} We model cooperative multi-agent tasks as Partially Observable Markov Games (POMGs)~\citep{shapley1953markovgames}, defined by the tuple \mbox{$\mathcal{M} = \langle \mathcal{N}, \mathcal{S}, \mathcal{A}, \mathcal{O}, \mathcal{P}, \Omega, \mathcal{R}, \gamma, \mathcal{B}\rangle$}. Here, $\mathcal{N}$ is the set of agents, $\mathcal{S}$ the state space, $\mathcal{A}$ and $\mathcal{O}$ the joint action and observation space, $\mathcal{P}$ the transition dynamics, $\Omega$ the observation function, $\mathcal{R}$ the shared reward function, and $\gamma \in [0,1)$ is the discount factor. We additionally let $\mathcal{B}$ denote the \textbf{behavior manifold} of $\mathcal{M}$, i.e., the space of admissible policies $\pi: \mathcal{O} \rightarrow \mathcal{P}(\mathcal{A})$ that the task supports. In this sense, $\mathcal{B}$ is a property of the task---its structure and coverage are determined by what strategies $\mathcal{M}$ admits---and is not given a priori; how to represent and learn it is the central concern of Section \ref{sec:framework}. Each agent $i \in \mathcal{N}$ executes a policy $\pi_i(a_{i,t} | o_{i,t}) \in \mathcal{B}$, and the team objective is to find a joint policy $\boldsymbol{\pi}^*$ maximizing the expected discounted return $R = \mathbb{E}[\sum_{t=0}^{\infty}\gamma^t\sum_{i\in\mathcal{N}}r_{i,t}]$. Note that a consequence of this formulation is that a behavior is equivalent to a policy, so we use these terms as interchangeable. We also remark that the observation vector $o_i \in \mathcal{O}$ includes a physical capability descriptor when relevant for the task, in contrast to existing approaches where physical capabilities are defined as independent quantities. We do this to emphasize that our focus is on behavioral diversity, in which agents display different action distributions given identical inputs, rather than physical diversity. 

\textbf{Events as Triggers of Behavioral Change.} Standard POMGs treat all transitions as instances of the same dynamics $\mathcal{P}$, leaving no formal distinction between qualitative changes in task progress. To capture such moments, we augment the POMG with a discrete \textbf{event space} $\Xi$ and an \textbf{event-detection function} $\mathcal{E}: \mathcal{S} \rightarrow \Xi \cup \{\varnothing\}$ that maps the current state of the system to either a triggered event $\xi_t \in \Xi$ or the null symbol $\varnothing$, i.e., no event happened. 
Roughly speaking, an event is a change in the system state---including team composition and environment configuration---that induces a qualitative shift in the optimal joint behavior.
Events may originate from distinct sources. In our empirical study, we consider the following cases: changes in an agent's own observations (e.g., a LiDAR detecting a target); changes in the team configuration (e.g., an agent being added); or environmental signals (e.g., a door opening). When an event triggers, it induces a non-stationary transition $(s_t, \mathcal{N}_t, \mathbf{b}_t) \xrightarrow{\xi_t} (s_{t+1}, \mathcal{N}_{t+1}, \mathbf{b}_{t+1})$ that displaces the system onto a new optimal trajectory manifold, where $\mathbf{b}_t = (\pi_{1,t}, \ldots, \pi_{|\mathcal{N}_t|,t}) \in \mathcal{B}^{|\mathcal{N}_t|}$. The challenge for the team is to respond to $\xi_t$ by adapting the behaviors of its members. We refer to the resulting setting as an \textbf{Event-Augmented POMG}, defined, with an abuse of notation, as $\mathcal{M} = \langle \mathcal{N}, \mathcal{S}, \mathcal{A}, \mathcal{O}, \mathcal{P}, \Omega, \mathcal{R}, \gamma, \mathcal{B}, \Xi, \mathcal{E}\rangle$.

\textbf{Problem Formulation.} Given $\mathcal{M}$, our goal is to learn a mechanism that decouples agent identity from executed behavior. Concretely, we seek to learn (1) the continuous behavior manifold $\mathcal{B}$ that captures the space of behaviors admitted in a task, and (2) an event-conditioned mechanism $f: \mathcal{O} \times \Xi \rightarrow \mathcal{B}$ that maps each agent's observation and most recent event to a behavior $\pi_{i,t} \in \mathcal{B}$.
Beyond maximizing cumulative reward, if behaviors are truly properties of the task only, the mechanism should generalize zero-shot as any of the following vary: (1) number of agents, (2) physical capabilities, and (3) event sequences, including disruptions unseen during training (e.g., abrupt teammate removal).

\section{A Framework for Event-Driven Behavioral Adaptation}
\label{sec:framework}

To describe our framework, we proceed in three steps. First, we introduce \metric{} (\longmetric{}), the metric that characterizes the behavior manifold $\mathcal{B}$ (Section \ref{subsec:nmd}). Second, we present its architectural realization: an event-driven hypernetwork that generates LoRA modules over a shared policy backbone that allows us to learn $\mathcal{B}$ and allocate policies to agents according to it (Section \ref{subsec:architecture}). Third, we describe how diversity is maintained and controlled on $\mathcal{B}$ and we state the theoretical guarantees that follow from this construction, deferring proofs to Appendix \ref{app:proofs} (Section \ref{subsec:dico}).  

\subsection{Modeling the Behavior Manifold: Neural Manifold Diversity}\label{subsec:nmd}
The behavior manifold $\mathcal{B}$ is the central object our framework manipulates: agents draw from it, events trigger movement within it, and its geometry determines what coordination strategies the team has available. To make this manifold a tractable object---one we can measure and control---we begin by quantifying the diversity of policies drawn from it.
\begin{definition}
Let $\mathcal{B}$ denote the continuous behavior manifold induced by the task $\mathcal{M}$ and let $p(\pi)$ denote a probability density over $\mathcal{B}$. The \longmetric{} is the expected pairwise divergence between behaviors sampled from $\mathcal{B}$, evaluated over the observation distribution $p(o)$:
\begin{equation}\label{eq:og_nmd}
    \mathrm{\metric{}}(\mathcal{B}, \mathcal{O}) = \int_{\mathcal{B}} \int_{\mathcal{B}} \int_{\mathcal{O}} W_2\!\left(\pi_m(o), \pi_n(o)\right) p(o)\, do\, p(\pi_m)\, d\pi_m\, p(\pi_n)\, d\pi_n,
\end{equation}
where $W_2$ is the 2-Wasserstein distance between action distributions.
\end{definition}
A property that justifies treating \metric{} as a principled diversity measure rather than an arbitrary statistic is that the pairwise term in Eq. \eqref{eq:og_nmd} defines a proper distance on $\mathcal{B}$.
\begin{proposition}\label{prop:nmd_distance}
    Let $d: \mathcal{B} \times \mathcal{B} \to \mathbb{R}_{\geq 0}$ be $d(\pi_m, \pi_n) := \mathbb{E}_{o \sim p(o)}\!\left[W_2(\pi_m(o), \pi_n(o))\right]$. Then $d$ is a pseudometric on $\mathcal{B}$, and a metric whenever policies are distinguishable on the support of $p(o)$.
\end{proposition}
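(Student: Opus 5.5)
The plan is to inherit the metric axioms of $W_2$ pointwise in $o$ and then pass them through the expectation over $p(o)$, using linearity and monotonicity of the integral.

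Throughout, I assume two standing conditions that make $d$ finite:
\begin{itemize}
\item every action distribution $\pi(o)$ has finite second moment;
\item $o \mapsto W_2(\pi_m(o),\pi_n(o))$ is measurable and $p$-integrable.
\end{itemize}
These are implicit in the definition of $d$ as a map into $\mathbb{R}_{\geq 0}$. I will also recall the standard fact that $W_2$ is a genuine metric on $\mathcal{P}_2(\mathcal{A})$.

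\textbf{Pseudometric axioms.} I would check them in order.
\begin{itemize}
\item \emph{Nonnegativity} is immediate, since $W_2 \geq 0$.
\item \emph{Vanishing on the diagonal}: $d(\pi,\pi)=\mathbb{E}_o[W_2(\pi(o),\pi(o))]=0$.
\item \emph{Symmetry} follows pointwise from $W_2(\mu,\nu)=W_2(\nu,\mu)$.
\item \emph{Triangle inequality}: for any $\pi_k$ and every $o$,
\[
W_2(\pi_m(o),\pi_n(o)) \le W_2(\pi_m(o),\pi_k(o)) + W_2(\pi_k(o),\pi_n(o)).
\]
Integrating against $p(o)$ and using linearity and monotonicity of the expectation gives $d(\pi_m,\pi_n)\le d(\pi_m,\pi_k)+d(\pi_k,\pi_n)$.
\end{itemize}

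\textbf{Identity of indiscernibles.} I would make the hypothesis precise as follows: policies $\pi_m \neq \pi_n$ are \emph{distinguishable on the support of $p(o)$} if $\pi_m(o)\neq\pi_n(o)$ on a set of positive $p$-measure. Equivalently, policies that agree $p$-a.e. are identified.

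Suppose $d(\pi_m,\pi_n)=0$. The integrand $W_2(\pi_m(o),\pi_n(o))$ is nonnegative and has zero integral, so it vanishes $p$-almost everywhere. Since $W_2$ is a metric, $\pi_m(o)=\pi_n(o)$ for $p$-a.e. $o$, and distinguishability then forces $\pi_m=\pi_n$.

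\textbf{Expected obstacle.} This last step is the only delicate one, and it is measure-theoretic. A zero integral yields pointwise equality only almost everywhere, not everywhere. So the statement "metric whenever policies are distinguishable" must be read in this a.e. sense, or equivalently as $d$ being a metric on the quotient of $\mathcal{B}$ by $p$-a.e. equality.

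If $\mathcal{O}$ is a topological space and the maps $o\mapsto\pi(o)$ are continuous in $W_2$, a cleaner route is available. Then $o\mapsto W_2(\pi_m(o),\pi_n(o))$ is continuous. If it were positive at some point of $\operatorname{supp}p$, it would be positive on an open neighbourhood, which has positive $p$-measure, and so $d>0$. Hence $d=0$ forces equality on all of $\operatorname{supp}p$. I would present the a.e. version as the main argument and this continuity remark as a sufficient condition.
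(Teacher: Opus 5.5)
Your proposal is correct and follows essentially the same route as the paper's proof: you inherit non-negativity, symmetry, and the triangle inequality of $W_2$ pointwise in $o$ and integrate against $p(o)$, and you get identity of indiscernibles because a non-negative integrand with zero integral vanishes $p$-almost everywhere. Your explicit integrability and measurability conditions, and your definition of distinguishability as disagreement on a set of positive $p$-measure, make precise what the paper leaves as an informal ``separation assumption''.
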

The proposition guarantees that $d$ satisfies non-negativity, symmetry, identity of indiscernibles (up to $p(o)$ not belonging to a null set), and the triangle inequality. As a consequence, $\mathrm{\metric{}}(\mathcal{B}, \mathcal{O})$ admits a clear interpretation: it is the expected $d$-distance between two behaviors drawn independently from $\mathcal{B}$. The problem with Eq. \eqref{eq:og_nmd} is that it is intractable because $\mathcal{B}$ and $\mathcal{O}$ are continuous and high-dimensional. In practice, we instantiate a finite collection of policies $\{\pi_m\}_{m=1}^B$ of behaviors and we observe a finite set of episodic observations $\mathcal{O}_{\mathrm{ep}} \subset \mathcal{O}$, substituting the integrals with empirical averages to yield to an estimator:
\begin{equation}\label{eq:nmd_approx}
    \widehat{\mathrm{\metric{}}}\!\left(\{\pi_m\}_{m=1}^B\right) = \frac{2}{B(B-1)\,|\mathcal{O}_{\mathrm{ep}}|} \sum_{m=1}^{B}\sum_{n=m+1}^{B} \sum_{o \in \mathcal{O}_{\mathrm{ep}}} W_2\!\left(\pi_m(o), \pi_n(o)\right).
\end{equation}
This is an unbiased statistic for $\mathrm{\metric{}}(\mathcal{B}, \mathcal{O})$ whenever the sampled behaviors are i.i.d. In our implementation, $\{\pi_m\}_{m=1}^B$ is the set of behaviors in a parallel batch of environments and $\mathcal{O}_{\mathrm{ep}}$ is the set of observations encountered along the rollout.

Structurally, Eqs. \eqref{eq:og_nmd} and \eqref{eq:nmd_approx} resemble that of SND~\citep{bettini2023system} because it also poses diversity as a distance in action distribution. However, the most important aspect of \metric{} is what is absent from it: agent indices. Existing metrics, including SND, compute pairwise distances between policies indexed by agent identities, while \metric{} computes distances between behaviors $\pi \in \mathcal{B}$. This allows \metric{} to remain well-defined under the dynamics of our framework: behaviors that are transient, shared across agents, and adapted in response to events.

\subsection{Architectural Realization}\label{subsec:architecture}

To capture, populate and navigate $\mathcal{B}$, we require an architecture satisfying three requirements: \textit{(i)} it must produce behaviors as a function of an agent's local observation and the most recent event; \textit{(ii)} it must do so cheaply enough to avoid prohibitive per-timestep computation; and \textit{(iii)} it must allow to introduce mechanisms to preserve the diversity of the behavior manifold $\mathcal{B}$. We achieve this with an event-driven hypernetwork that emits LoRA modules over a shared policy backbone.

\textbf{Event-driven hypernetwork.} Let $g_\theta$ be a transformer-based hypernetwork. Given the most recent event $\xi_t$, $g_\theta$ ingests a token per agent from their local observations $o_{i,t}$, the event encoding $e_t$, and the global target diversity scalar $\mathrm{\metric{}}_{\mathrm{des}}$, and emits a LoRA pair $(C_m, D_m)$ per agent. Note that we use index $m$ instead of $i$ to highlight that the hypernetwork might predict the same $(C_m, D_m)$ pair for various agents. The hypernetwork is queried only when an event fires (and once at initialization); between events, the pair is held fixed. This avoids the static brittleness of episode-initial generation~\citep{Tessera2024} and the cost and instability of per-step regeneration~\citep{Fu2025}.

\textbf{LoRA over a shared backbone.}  The team shares a policy backbone $W_{\mathrm{shared}}$ and applies the assigned LoRA adapters $D_m C_m$ to the final linear layer of the policy. Given the penultimate feature $\phi(o_t)$, the pre-activation output of behavior $m$ is:
\begin{equation}\label{eq:lora_adaptation}
    z_m(o_t) = W_{\mathrm{shared}}\, \phi(o_t) \;+\; \alpha\, D_m C_m\, \phi(o_t),
\end{equation}
where $\alpha$ is a scalar that is used to preserve diversity, which will be explained in Section \ref{subsec:dico}. The final action distribution $\pi_m(o_t)$ is obtained by applying an activation function (squashed Gaussian for continuous control) to $z_m(o_t)$. Three design choices deserve note. First, restricting the hypernetwork to emit $C_m \in \mathbb{R}^{r \times d}$, $D_m \in \mathbb{R}^{d_a \times r}$ with $r \ll d$ stabilizes optimization relative to generating dense matrices; it also reduces the hypernetwork's output dimensionality (see Appendix \ref{app:lorarank}). Second, applying the LoRA update to the final linear layer ensures that the heterogeneous component enters the policy linearly in $\phi(o_t)$, the condition that makes diversity maintenance possible and gradient projections analytically tractable (Section \ref{subsec:dico}). Third, $W_{\mathrm{shared}}$ accumulates task-general representations, while $(C_m, D_m)$ encodes only the deviation needed to instantiate a specific behavior, keeping the hypernetwork output small and the solution memory efficient. 
Overall, the event-based hypernetwork with LoRA modules allows us to learn a representation of $\mathcal{B}$ through the weights $\theta$ of $g_{\theta}$ and, at the same time, assign behaviors to agents by generating $(C_m, D_m)$ pairs, effectively allowing to capture, populate and navigate the behavior manifold $\mathcal{B}$.

\subsection{Diversity Control on the Behavior Manifold}\label{subsec:dico}

To ensure that the hypernetwork learns a well-structured behavior manifold, we opt to enforce a diversity constraint inspired by Diversity Control (DiCo)~\citep{bettini2024controlling}. 
We express each behavior as the sum of a shared component $W_{\mathrm{shared}} \phi(o_t)$ and a per-behavior deviation \mbox{$u_m(o_t) = D_m C_m \phi(o_t)$}, as
described in Eq. \eqref{eq:lora_adaptation}. 
To ensure diversity constraint, the scalar $\alpha$ is set as:
\begin{equation}\label{eq:alpha}
    \alpha = \frac{\mathrm{\metric{}}_{\mathrm{des}}}{\widehat{\mathrm{\metric{}}}\!\left(\{u_m(o_t)\}_{m=1}^B\right)},
\end{equation}
where the denominator is computed over the $B$ behaviors present in the parallel batch of environments, and $\mathrm{\metric{}}_{\mathrm{des}}$ is the target diversity level. 
Because the LoRA update enters Eq. \eqref{eq:lora_adaptation} linearly, scaling the deviations by $\alpha$ scales the resulting behavioral distance by $\alpha$ exactly, so the realized diversity equals $\mathrm{\metric{}}_{\mathrm{des}}$ by construction. 
This linearity is the technical reason for the final-layer linear placement in Section \ref{subsec:architecture}: any nonlinearity between the LoRA update and the policy mean would forfeit this reduction. 
Our formulation is an extension of DiCo~\citep{bettini2024controlling}, which expresses each agent's policy as $\pi_i(o) = \pi_h(o) + \lambda\, \pi_{h,i}(o)$, where $\pi_h$ is a shared component, $\pi_{h,i}$ is a per-agent deviation, and $\lambda$ scales the deviation to match a target SND value.
However, unlike ours, DiCo is agent-centric: each $\pi_{h,i}$ is permanently bound to an agent and the matched diversity is measured between fixed agent indices.
Together, Eqs. \eqref{eq:lora_adaptation} and \eqref{eq:alpha} generalize DiCo from a fixed set of indexed agents to the time-varying set of behaviors instantiated on the manifold.

This brings to the main result that justifies the architectural choice in Section \ref{subsec:architecture} and the diversity control mechanism: we do not merely enforce a target diversity level, but also make reward optimization and diversity preservation jointly stable without auxiliary losses or trust regions. 
\begin{theorem}\label{theorem:joint}
    The gradient of the expected return $R$ with respect to the deviation $u_m(o_t)$ satisfies $\nabla_{u_m} R = \alpha\, P_{u_m}\, (\nabla_{z_m} R)^\top$, where $P_{u_m} = I - \frac{u_m \otimes \nabla_{u_m} \widehat{\mathrm{\metric{}}}}{\widehat{\mathrm{\metric{}}}}$ is an idempotent projection matrix.
\end{theorem}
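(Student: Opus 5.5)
The plan is to treat the theorem as a chain-rule computation through the normalized map $u_m \mapsto \alpha u_m$, and then to prove idempotence of $P_{u_m}$ from the homogeneity of $\widehat{\mathrm{\metric{}}}$.

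\textbf{Setup.} Fix an observation $o_t$ and write $\widehat{\mathrm{\metric{}}}=\widehat{\mathrm{\metric{}}}(\{u_k\}_{k=1}^B)$ for the denominator in Eq.~\eqref{eq:alpha}. The return $R$ depends on $u_m$ only through the pre-activations $z_k = W_{\mathrm{shared}}\phi(o_t) + \alpha u_k$ of Eq.~\eqref{eq:lora_adaptation}. There are two channels:
\begin{itemize}
\item the direct term $\alpha u_m$ inside $z_m$;
\item the normalizer $\alpha$, which depends on $u_m$ and enters every $z_k$.
\end{itemize}
I will focus on the channel through $z_m$, treating $R$ as a function of $z_m$ alone. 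This is what the stated formula encodes: only $\nabla_{z_m}R$ appears. I will state this restriction explicitly, or equivalently note that the cross terms through $z_k$, $k\neq m$, are handled by the same computation applied to each $k$.

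\textbf{Chain rule.} The Jacobian of $z_m$ with respect to $u_m$ is
\[
\frac{\partial z_m}{\partial u_m} = \alpha I + u_m \otimes \nabla_{u_m}\alpha .
\]
From $\alpha = \mathrm{\metric{}}_{\mathrm{des}}/\widehat{\mathrm{\metric{}}}$ we get
\[
\nabla_{u_m}\alpha = -\frac{\mathrm{\metric{}}_{\mathrm{des}}}{\widehat{\mathrm{\metric{}}}^2}\,\nabla_{u_m}\widehat{\mathrm{\metric{}}} = -\frac{\alpha}{\widehat{\mathrm{\metric{}}}}\,\nabla_{u_m}\widehat{\mathrm{\metric{}}}.
\]
Substituting gives $\partial z_m/\partial u_m = \alpha P_{u_m}$ with $P_{u_m}$ as in the statement. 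Contracting with $\nabla_{z_m}R$ then yields the claimed formula. One needs to fix the transpose/layout convention so that the Jacobian acts on the gradient as written. I take $u\otimes v = u v^\top$ and the formula read as $\nabla_{u_m}R = \alpha\, P_{u_m}(\nabla_{z_m}R)^\top$ after appropriate transposition. If the layout does not match exactly, it can be absorbed by noting that the projection property holds for both $P$ and $P^\top$.

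\textbf{Idempotence.} This is the main step. Expanding,
\[
P_{u_m}^2 = I - \frac{2\,u_m\otimes\nabla\widehat{\mathrm{\metric{}}}}{\widehat{\mathrm{\metric{}}}} + \frac{u_m\,(\nabla\widehat{\mathrm{\metric{}}}^\top u_m)\,\nabla\widehat{\mathrm{\metric{}}}^\top}{\widehat{\mathrm{\metric{}}}^2}.
\]
So $P_{u_m}^2 = P_{u_m}$ if and only if $\langle \nabla_{u_m}\widehat{\mathrm{\metric{}}}, u_m\rangle = \widehat{\mathrm{\metric{}}}$.

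The key fact is that $\widehat{\mathrm{\metric{}}}$ is positively $1$-homogeneous in the deviations. The Wasserstein distance between Gaussians sharing a covariance (the pre-squashing means) is the Euclidean norm of the difference of means, so each pairwise term has the form $\|u_m - u_n\|$. Scaling all $u_k$ by $c>0$ therefore scales $\widehat{\mathrm{\metric{}}}$ by $c$; this is the same linearity already used to justify Eq.~\eqref{eq:alpha}.

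Euler's identity then gives $\sum_k \langle\nabla_{u_k}\widehat{\mathrm{\metric{}}},u_k\rangle = \widehat{\mathrm{\metric{}}}$. This concerns the full sum over $k$, not the single-$m$ term, and that gap is the obstacle. I would first try to resolve it by reading $u_m$ and $\nabla_{u_m}$ in the statement as the stacked vector of all deviations $(u_1,\dots,u_B)$. Then the single-block issue disappears and idempotence is exactly Euler's identity. In this reading $P$ is the oblique projection onto the kernel of $\nabla\widehat{\mathrm{\metric{}}}$ along $u$. Consequently, gradient steps along $\nabla_u R$ leave $\widehat{\mathrm{\metric{}}}$ unchanged to first order, which is the promised non-interference.

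If the per-behavior reading is required, I would fall back to the centered case. Relative to a mean deviation $\bar u$, each pair contributes symmetrically, and one checks the identity per block under the assumption $\sum_k u_k = 0$, enforced or approximated by the shared backbone absorbing the mean. I expect this to be the step that needs the most care and an explicit additional assumption.

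Non-differentiability of the norm at $u_m=u_n$ will be handled by assuming distinct deviations, or by using a subgradient.
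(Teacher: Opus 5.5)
Your chain-rule step is essentially the paper's Step 1 verbatim: a single channel through $z_m$, giving $\partial z_m/\partial u_m=\alpha I+u_m\otimes\nabla_{u_m}\alpha=\alpha P_{u_m}$. The obstacle you isolate is real, and it is exactly where the paper's own argument breaks. The paper's Step 3 asserts that $\widehat{\mathrm{NMD}}$ is positively homogeneous of degree $1$ \emph{in $u_m$}, and applies Euler blockwise to get $u_m^\top\nabla_{u_m}\widehat{\mathrm{NMD}}=\widehat{\mathrm{NMD}}$. But $\sum_{k<l}\|u_k-u_l\|$ is only jointly homogeneous, so only your summed identity is available.

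Your stacked reading $u=(u_1,\dots,u_B)$ is the correct repair. Stack over every argument of $\widehat{\mathrm{NMD}}$, including all $o\in\mathcal{O}_{\mathrm{ep}}$ if $\alpha$ is computed over the rollout. This reading gives more than idempotence. With $z=(z_1,\dots,z_B)$ one gets $\partial z/\partial u=\alpha I-\frac{\alpha}{\widehat{\mathrm{NMD}}}\,u\,(\nabla_u\widehat{\mathrm{NMD}})^\top=\alpha P$ exactly, so no dependence of $R$ on $u$ is discarded. In the per-block reading, by contrast, the channels through $z_k$, $k\neq m$, are not ``handled by the same computation for each $k$''. They contribute $-\frac{\alpha}{\widehat{\mathrm{NMD}}}\big(\sum_{k\neq m}u_k^\top\nabla_{z_k}R\big)\nabla_{u_m}\widehat{\mathrm{NMD}}$. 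This term involves the other behaviors' gradients and cannot be written as $P_{u_m}$ applied to $\nabla_{z_m}R$; the paper simply omits it.

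The genuine gap is your fallback: the per-behavior identity fails even for centered deviations. Take $B=2$, a single observation, $d_a=1$, $u_1=1$, $u_2=-1$. Then $\widehat{\mathrm{NMD}}=|u_1-u_2|=2$ and $\partial\widehat{\mathrm{NMD}}/\partial u_1=1$. Hence $P_{u_1}=1-\tfrac12=\tfrac12$ and $P_{u_1}^2=\tfrac14\neq P_{u_1}$. So the theorem read per behavior is false, and no centering assumption rescues it. Drop the fallback and state explicitly that you prove the joint version, which is the only one that holds.

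Finally, do not absorb the transpose; the paper also drops it ``for notational convenience''. With column gradients the chain rule gives $\nabla_u R=\alpha P^\top\nabla_z R$. $P$ is an oblique projector, symmetric only when $u\parallel\nabla_u\widehat{\mathrm{NMD}}$, so $\alpha P\nabla_z R$ is in general a different vector. The idempotence claim survives because $(P^\top)^2=(P^2)^\top=P^\top$.

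This also changes your interpretation. $P^\top$ projects onto $u^\perp$ along $\nabla_u\widehat{\mathrm{NMD}}$, so $\nabla_u R$ is orthogonal to $u$. That must hold, since $u\mapsto\alpha u$ is positively $0$-homogeneous and hence so is $R$. Gradient steps therefore preserve $\|u\|$, not $\widehat{\mathrm{NMD}}(u)$, to first order. The realized diversity $\alpha\,\widehat{\mathrm{NMD}}=\mathrm{NMD}_{\mathrm{des}}$ is pinned by construction anyway.
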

The theorem shows that diversity control does not enter as an external regularizer, but as a projector on the reward-maximizing gradient: any component of $\nabla_{z_m}R$ that would alter $\widehat{\mathrm{\metric{}}}$ is filtered out before the LoRA parameters are updated. The reward objective and the diversity constraint are therefore aligned by construction---the optimized architecture can move freely along directions that preserve realized diversity, and is gated along directions that would violate it.
\begin{corollary}\label{corollary:limits}
    The projection $P_{u_m}$ has two limiting regimes. \textbf{Vanishing deviation}: when $\|u_m\| \to 0$, $P_{u_m} \to I$ and the gradient is unprojected, allowing the behavior to escape a degenerate niche. \textbf{Dominant deviation}: when $\|u_m\|$ is large, $P_{u_m}$ contracts strongly along $u_m$, suppressing updates that would further amplify $\|u_m\|$ and collapse the heterogeneity of the remaining behaviors.
\end{corollary}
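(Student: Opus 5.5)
The plan is to read both regimes off the explicit form $P_{u_m} = I - \frac{u_m \otimes \nabla_{u_m}\widehat{\mathrm{\metric{}}}}{\widehat{\mathrm{\metric{}}}}$ given by Theorem~\ref{theorem:joint}. Write $g_m := \nabla_{u_m}\widehat{\mathrm{\metric{}}}$ and $N := \widehat{\mathrm{\metric{}}}(\{u_n\}_{n=1}^B)$. Then $P_{u_m} - I$ is the rank-one matrix $u_m g_m^\top / N$, so everything reduces to two things: the size of this matrix, and the direction in which it acts, as functions of $\|u_m\|$.

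The first ingredient is a structural description of $N$ as a function of $u_m$. I would restrict to the Gaussian policy heads of Section~\ref{subsec:architecture}, before squashing, with a shared covariance. Then $W_2(\pi_m(o),\pi_n(o))$ reduces to the Euclidean distance between means, which is $\alpha\|u_m(o)-u_n(o)\|$. Hence $N$ is an average of terms $\|u_m-u_n\|$, and two facts follow.

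\emph{(a) Gradient bound.} Each gradient $\nabla_{u_m}\|u_m-u_n\|$ is a unit vector $(u_m-u_n)/\|u_m-u_n\|$. Therefore $\|g_m\| \le c_B := 2/(B|\mathcal{O}_{\mathrm{ep}}|)$, up to the observation averaging.

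\emph{(b) Degree-one homogeneity.} $N$ is positively homogeneous of degree one, and Euler's identity gives $\sum_n u_n^\top g_n = N$.

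\emph{Vanishing deviation.} By (a), $\|P_{u_m}-I\| = \|u_m\|\,\|g_m\|/N \le c_B\|u_m\|/N$. Let $\|u_m\|\to 0$ with the other deviations held fixed. Then $N$ tends to the estimator of the remaining behaviors together with the zero deviation, which is strictly positive whenever $B\ge 3$ and the others are not all identical. So $P_{u_m}\to I$, and by Theorem~\ref{theorem:joint} $\nabla_{u_m}R \to \alpha(\nabla_{z_m}R)^\top$, which is unprojected. I would state this nondegeneracy of $N$ explicitly as a hypothesis; without it the ratio is $0/0$ and the limit is not determined.

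\emph{Dominant deviation.} Write $u_m = s\hat u$ with $s\to\infty$ and the other deviations fixed. Then $u_m-u_n = s\hat u + O(1)$. So $g_m = c_B(B-1)\hat u + O(1/s)$, while $N = c_B(B-1)s + O(1)$; the pairs not involving $m$ contribute only $O(1)$. Consequently
\[
\frac{u_m g_m^\top}{N} \;\longrightarrow\; \hat u\hat u^\top ,
\]
and $P_{u_m}\to I-\hat u\hat u^\top$. This limit is the orthogonal projector onto $u_m^\perp$. It annihilates the component of $\nabla_{z_m}R$ along $u_m$, which is exactly the direction that would increase $\|u_m\|$ and let behavior $m$ absorb the diversity budget at the expense of the others. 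For large but finite $s$, the contraction factor along $\hat u$ is $1-\hat u^\top g_m s/N = O(1/s)$, which quantifies ``contracts strongly''.

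The main obstacle is making the $W_2$-to-Euclidean reduction legitimate. The squashing and the non-smoothness of $\|\cdot\|$ at coincident behaviors both get in the way. I would first work with pre-squash means, as the paper's linearity argument in Section~\ref{subsec:dico} implicitly does. For the kink, I would assume $u_m\neq u_n$ on $\mathcal{O}_{\mathrm{ep}}$, or else use subgradients, which still satisfy bound (a). The remaining steps are short asymptotic expansions.
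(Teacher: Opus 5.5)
Your proposal is correct and follows the same route as the paper. For the vanishing regime you use a bounded gradient of $\widehat{\mathrm{\metric{}}}$ with a denominator bounded away from zero, and for the dominant regime the expansion $u_m = s\hat u$. Your limit $I-\hat u\hat u^\top$ is in fact the right one: the paper's displayed limit $I-\frac{B(B-1)}{2}\hat u_m\hat u_m^\top$ normalizes by the number of pairs twice and is a projector only for $B=2$. Your only slip is the constant in your expansions of $g_m$ and $N$, which should be $c_B$ rather than $c_B(B-1)$ to agree with your bound (a); it cancels in the ratio anyway.
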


\section{Experiments}
\label{sec:experiments}

We empirically address four questions. 
First, we assess whether our framework constitutes a competitive MARL approach: \textbf{(Q1) How does our framework perform relative to established baselines across benchmarks?} 
Second, if behaviors are truly properties of the task, our approach should generalize without retraining across agent counts and capabilities. Our constrained formulation further enables generalization to a third axis, target diversity levels: \textbf{(Q2) Does the framework generalize zero-shot across varying agent counts, capabilities, and diversity levels?}
Third, we verify that our event-based framework triggers the intended behavioral shifts: \textbf{(Q3) Can our framework solve tasks that require sequential behavior allocation in response to events?} 
Fourth, if behaviors are task properties, the framework should also generalize to unseen event sequences, such as the removal of an agent mid-episode: \textbf{(Q4) Does our framework generalize to unseen event sequences?}

\subsection{Experimental Setup}

\textbf{Baselines.} We evaluate our framework against four baselines. \textbf{HyperMARL}~\citep{Tessera2024} allocates behaviors episode-wise via a hypernetwork conditioned on agent identities. 
Capability-Aware Shared Hypernetworks (\textbf{CASH})~\citep{Fu2025} allocates behaviors at every timestep via a hypernetwork conditioned on capabilities when available, or agent identities. 
Diversity Control  (\textbf{DiCo})~\citep{bettini2024controlling} inspired our constrained diversity optimization, but operates on fixed agent indices using cross-agent SND as the constraint; for each task, we perform a grid search over target SND values and report the best-performing variant. 
Finally, Parameter Sharing (\textbf{PS})~\citep{gupta2017cooperative} is a baseline with no behavior allocation, where all agents share the same policy.

\textbf{Environments.} 
We consider two sets of experiments. 
First, to address \textbf{(Q1)} and \textbf{(Q2)}, we evaluate our approach on four benchmarks established in recent studies on behavioral diversity~\citep{Tessera2024, bettini2024controlling, bettini2025impact}: (i) \textbf{Navigation} (local-sensing exploration), (ii) \textbf{Dispersion} (reaching $M$ targets from a central point), (iii) \textbf{Reverse Transport} (collective package pushing), and (iv) \textbf{Football} (competitive multi-team benchmark inspired by Google Research Football~\citep{kurach2020google}). 
Second, to address \textbf{(Q3)} and \textbf{(Q4)}, we introduce two custom tasks designed to highlight the benefits of our event-driven formulation. \textbf{Pressure Plate} requires a team to coordinate sequentially by holding down plates to open a locked door separating them from their goal; this task can only be solved with sequential event-based behavior allocation. \textbf{Wind Flocking} requires the team to maintain a formation where a an agent must provide shielding for another agent against wind resistance; this task allows for direct visualization of the effects of our proposed \metric{}.
All environments are implemented in VMAS~\citep{bettini2022vmas}, a popular and scalable MARL simulator. Details are in Appendix~\ref{app:expdetails} and source code at \url{https://anonymous.4open.science/r/hyperscale-48D6/}. 

\textbf{Events.} A key dimension of our framework is the event. In our experiments, an event is defined by the removal of an agent, a modification in agent capabilities, a change in the target diversity, or a change in an element of the environment (e.g., door open). In tasks featuring partial observability, such as Navigation, a change in the local LiDAR reading is likewise categorized as an event. 

\subsection{Q1: Performance compared to establish baselines}\label{subsec:q1}

To address \textbf{(Q1)}, we evaluate the performance of all approaches across the four primary tasks. Fig.~\ref{fig:four_tasks_row} reports the task-specific completion rate, the average reward, and the episode length. Our proposed framework systematically ranks top-1, achieving a completion rate of at least $95\%$ across all tasks and significantly outperforming all baselines on the more complex Football task.

Examining each panel reveals how the performance gap scales with task complexity. On Reverse Transport, all methods reach competitive completion rates except \textbf{PS}. The clear separation in this task is in episode length, where our method consistently terminates earliest, closely followed by \textbf{DiCo} but with a wide gap with respect ot the other methods. This indicates that event-driven behavior allocation produces more direct trajectories rather than merely succeeding more often. On Navigation, only our framework and \textbf{CASH} reach high completion with low episode length. On Dispersion, the gap widens. Our framework retains a near-perfect completion rate while \textbf{CASH} drops to roughly 50\%, \textbf{DiCo} to roughly 80\%, and \textbf{HyperMARL} and \textbf{PS} collapse. This is the first task where static behavior assignment becomes a clear liability: with M targets and no per-target specialization unlocked at the right moment, agents converge on overlapping subsets of targets. The most pronounced gap appears on Football, where every baseline collapses to near-zero completion, while our framework remains close to 100\%. We attribute this to the task's reliance on event-driven role transitions---possession changes, ball proximity, opponent positioning---which static or scheduled allocations cannot track.

\begin{figure}[t]
    \centering
    \begin{tabular}{cccc}
    \begin{subfigure}[b]{0.256\textwidth}
        \centering
        \captionsetup{margin={0.6cm, 0cm}}
        \includegraphics[width=\linewidth]{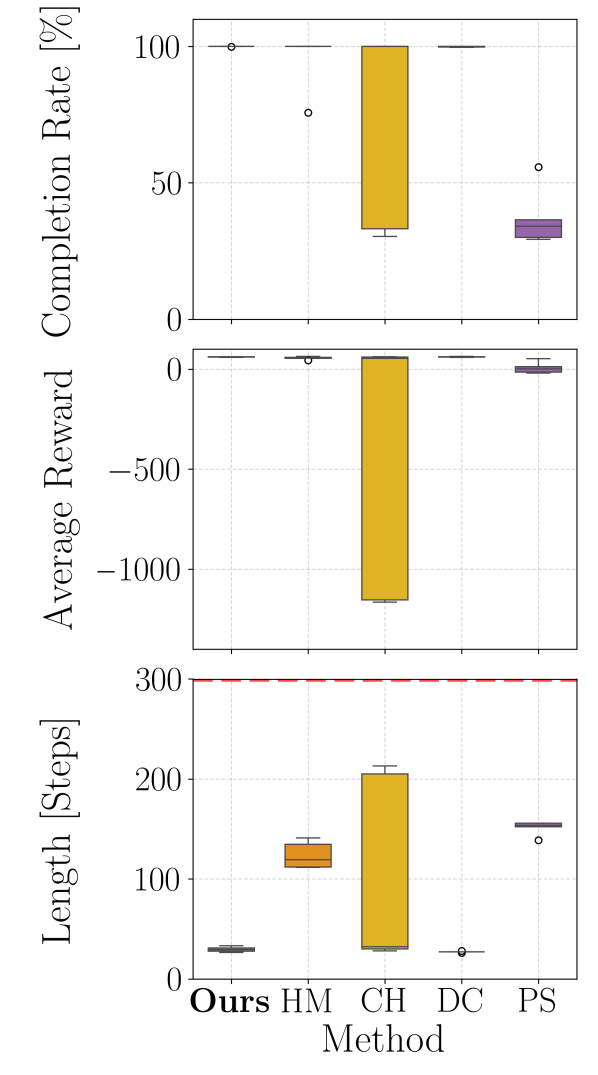}
        \caption{Reverse Transport}
        \label{fig:result_reverse_transport}
    \end{subfigure}
    \begin{subfigure}[b]{0.213\textwidth}
        \centering
        \includegraphics[width=\linewidth]{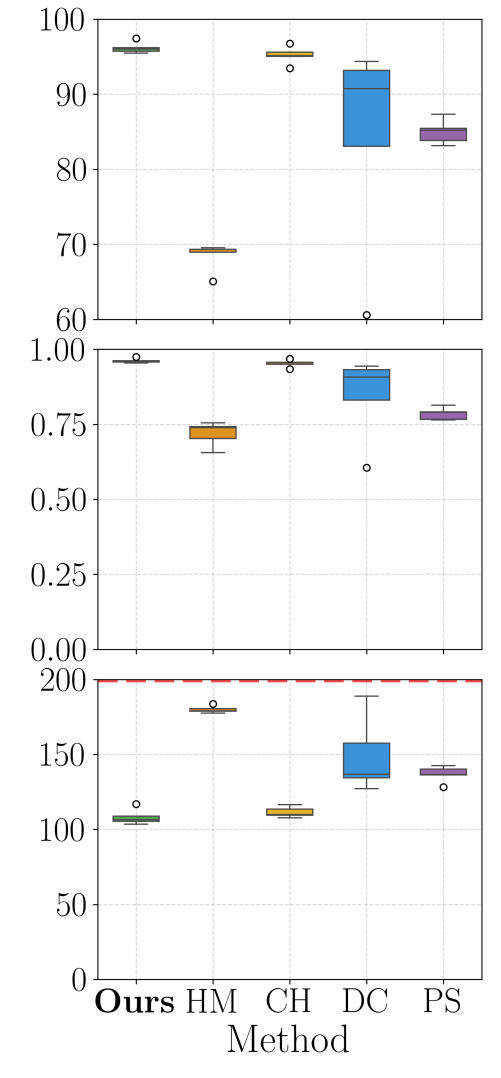} 
        \caption{Navigation}
        \label{fig:result_navigation}
    \end{subfigure}
    \begin{subfigure}[b]{0.213\textwidth}
        \centering
        \includegraphics[width=\linewidth]{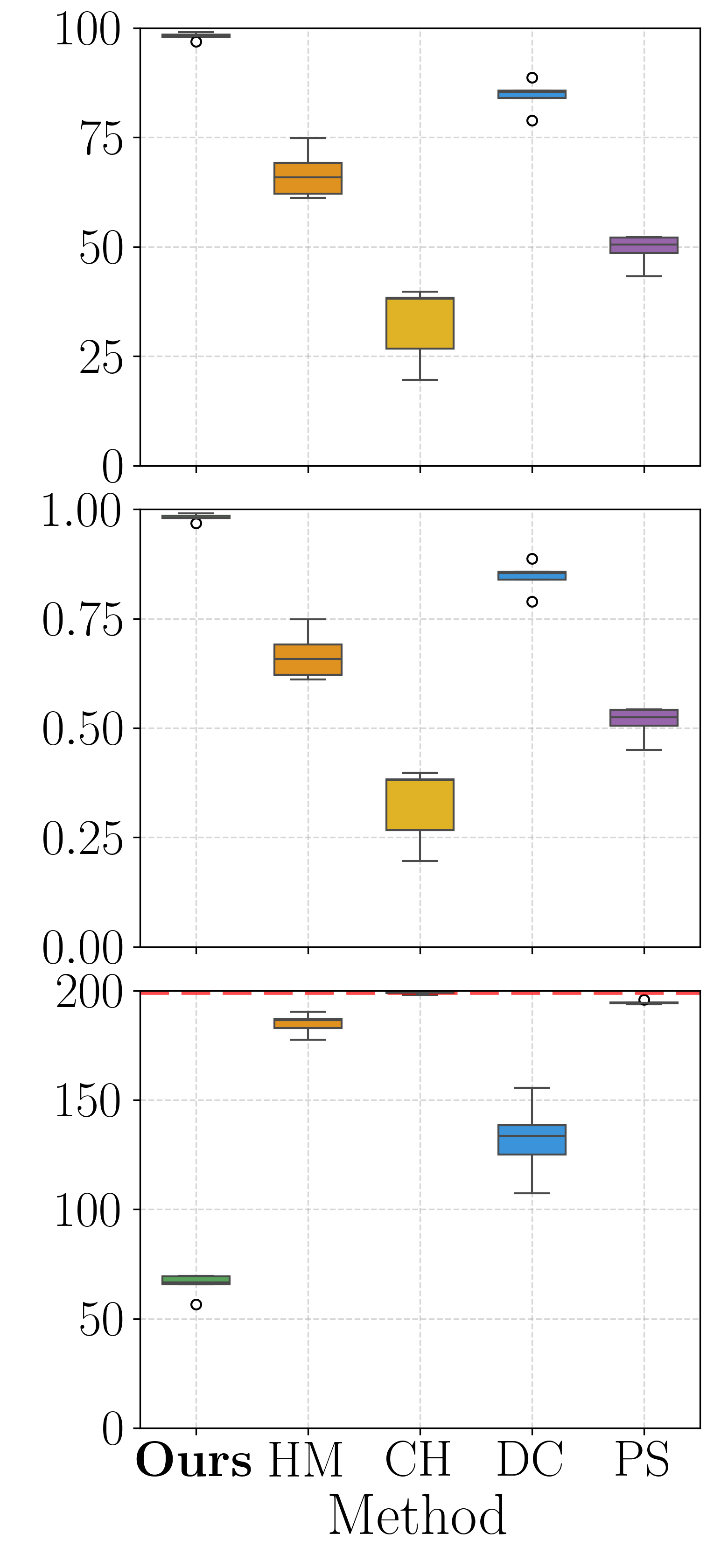}
        \caption{Dispersion} 
        \label{fig:result_dispersion}
    \end{subfigure}
    \begin{subfigure}[b]{0.213\textwidth}
        \centering
        \includegraphics[width=\linewidth]{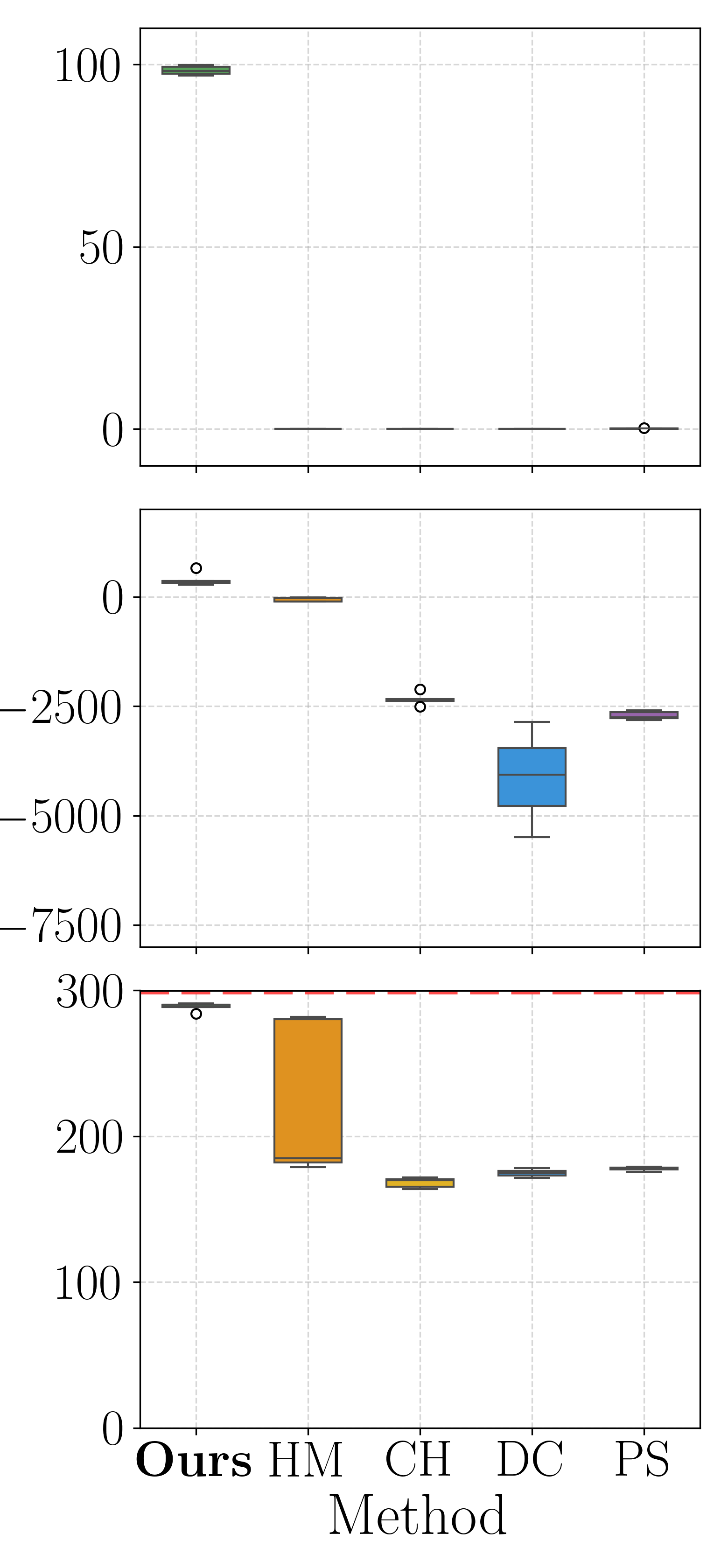}
        \caption{Football}
        \label{fig:result_football}
    \end{subfigure}
    \end{tabular}
    \caption{
    \textbf{Performance Comparison.}
    Final completion rate, average reward, and episode length across $5$ seeds. 
    Colors represent methods: green (Ours), orange (HM: HyperMARL), yellow (CH: CASH), blue (DC: DiCo), purple (PS: Parameter Sharing).
    }
    \label{fig:four_tasks_row}
\end{figure}

\subsection{Q2: Zero-shot Generalization across Number of Agents, Capabilities and Target Diversity}

To address \textbf{(Q2)}, we evaluate generalization across three axes. For agent count generalization, we exclude \textbf{HyperMARL} as its architecture cannot accommodate changes in team size, and replicate learned policies for \textbf{DiCo} and \textbf{PS} to enable comparison. For capability generalization, the considered tasks are homogeneous by default, so we run independent training runs for multiple capability values, concatenating capabilities to observations as inputs to the hypernetworks for our framework and \textbf{CASH}. The varied capabilities are: LiDAR range for Navigation, agent speed for Dispersion and Football, and agent force for Reverse Transport. Diversity target generalization is enabled directly by our constrained formulation and requires no additional training runs.

The results in Fig.~\ref{fig:scaling_results} show that our proposed approach maintains performance across these axes of zero-shot generalization, while baselines either stagnate or lose performance. Along the agent-count axis, our framework remains stable across the entire in-distribution and out-distribution ranges for all tasks, evidence that the hypernetwork has learned a behavior manifold in which changing the number of agents simply implies sampling a different amount of points in $\mathcal{B}$. The same conclusion can be drawn from the capability axis. Variations in LiDAR range, agent speed, and pushing force leave the high completion rate of our method largely unchanged, consistently outperforming baselines.

Along the target \metric{} axis, our completion rate is stable across roughly two order of magnitude of \metric{} values. Interestingly, the absolute value of these ranges vary across tasks, which implies that there is actually a ``good'' region of \metric{} values. The stability of this last axis is the empirical counterpart to Theorem \ref{theorem:joint}: because diversity is enforced through a projection, the achievable reward does not change as $\mathrm{\metric{}}_{\mathrm{des}}$ is varied within reasonable ranges. Overall, this also helps to explain the success of our approach, as diversity control maintains a sufficient level of diversity to span all necessary behaviors to solve the task.

\begin{figure}[t]
    \centering
    
    \begin{subfigure}{0.245\textwidth}
        \includegraphics[width=\linewidth]{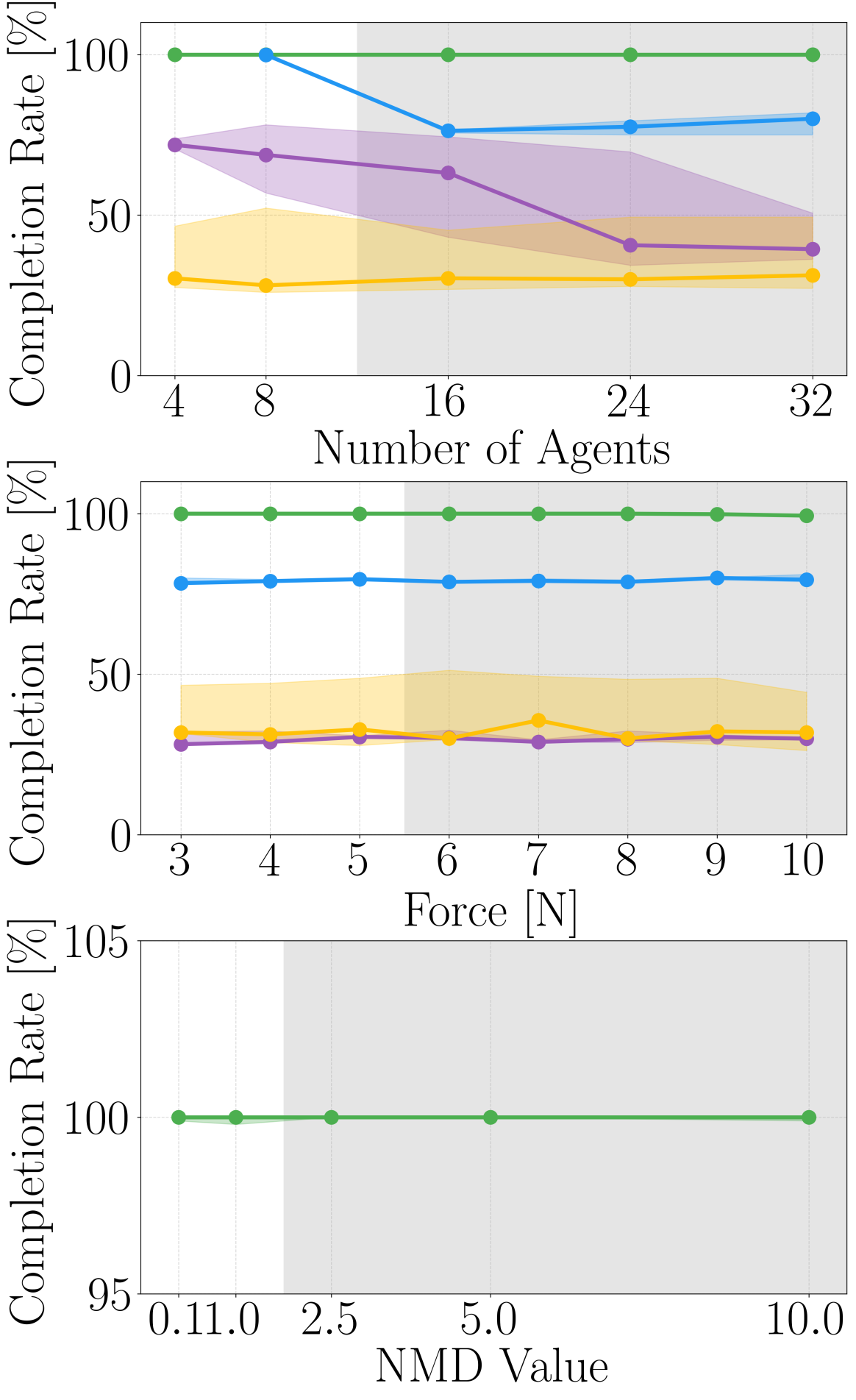} 
        \caption{Reverse Transport}
        \label{fig:nav}
    \end{subfigure}\hfill
    \begin{subfigure}{0.233\textwidth}
        \includegraphics[width=\linewidth]{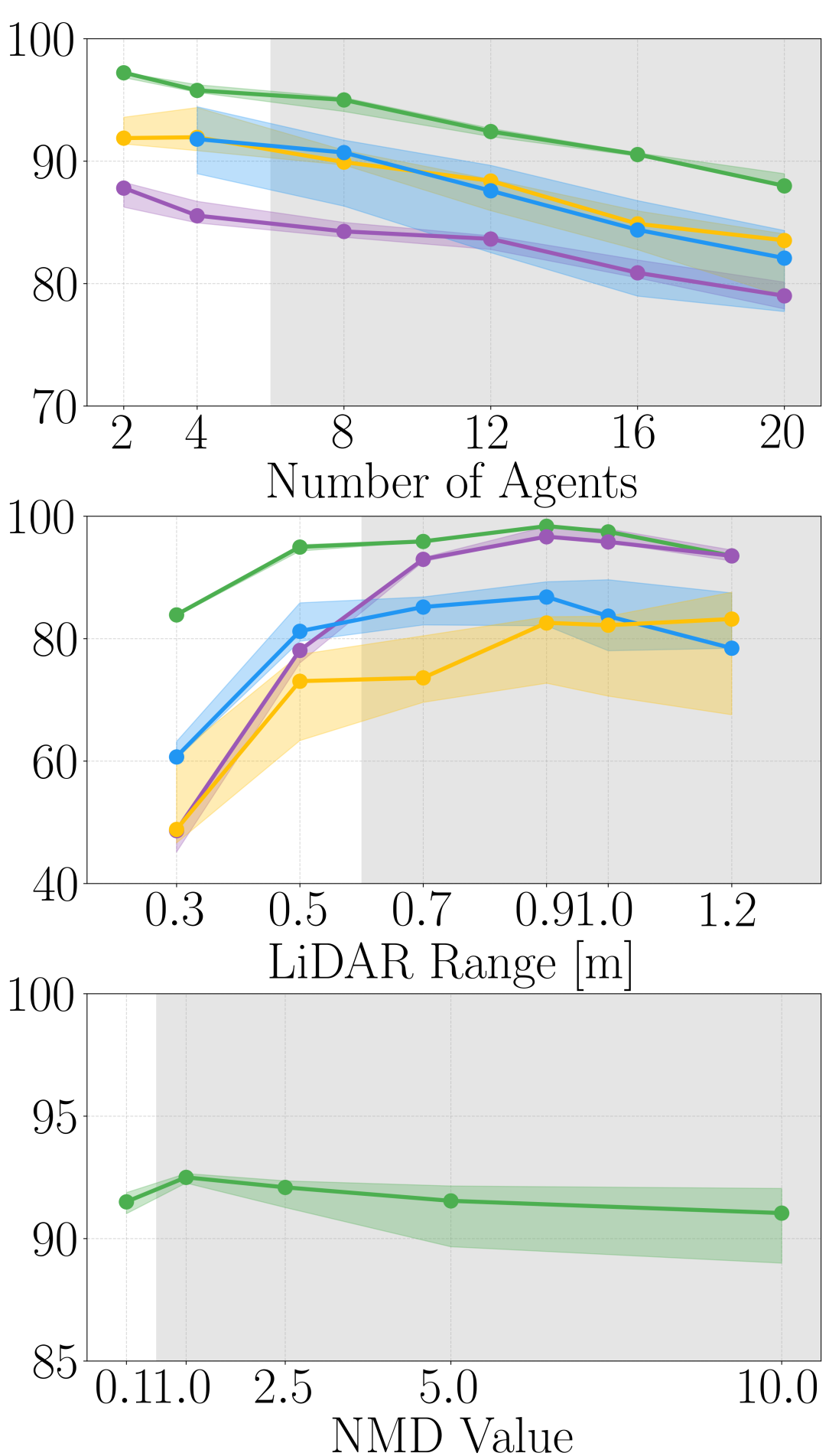}
        \caption{Navigation}
        \label{fig:disp}
    \end{subfigure}\hfill
    \begin{subfigure}{0.231\textwidth}
        \includegraphics[width=\linewidth]{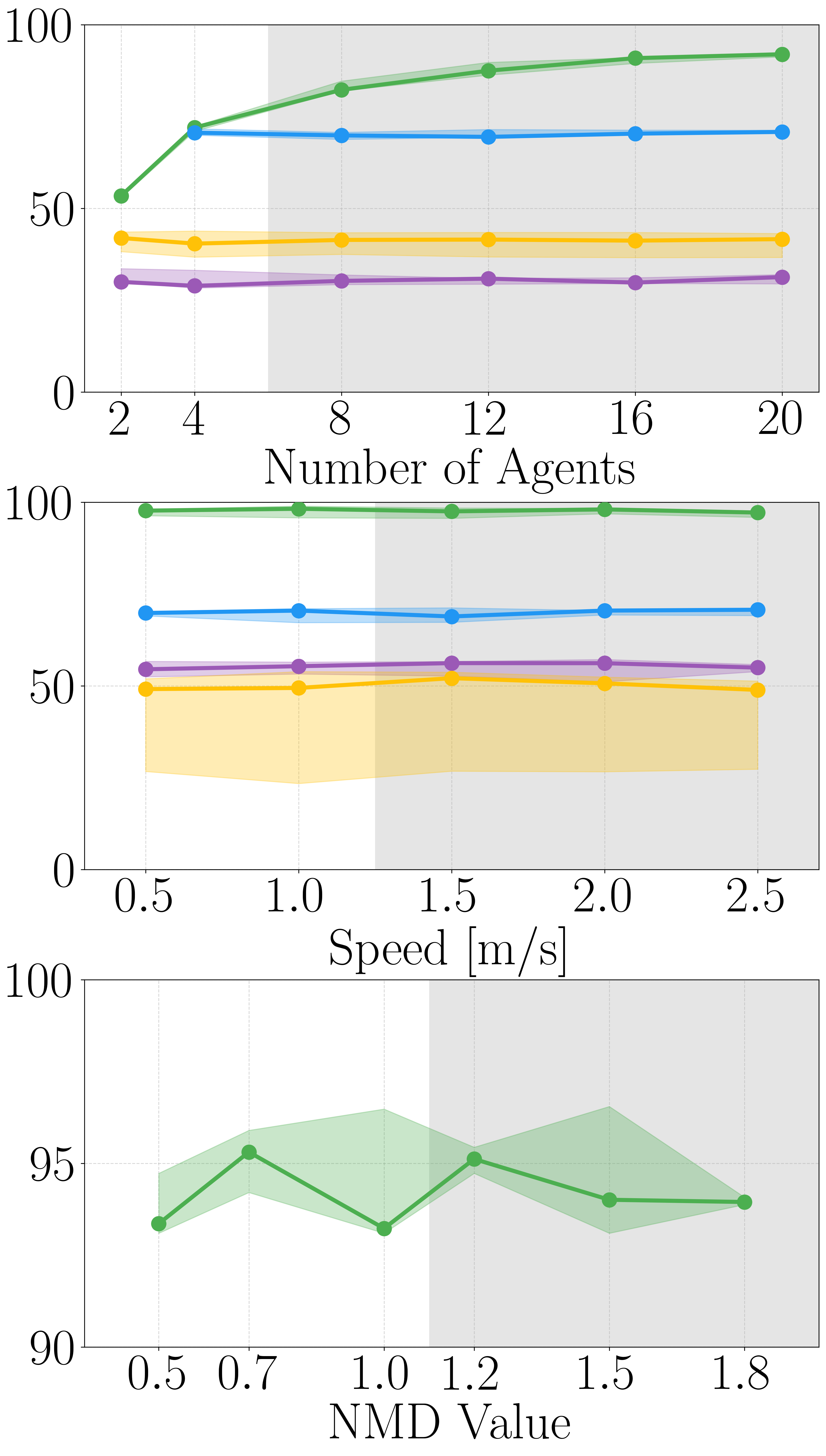}
        \caption{Dispersion}
        \label{fig:rt}
    \end{subfigure}\hfill
    \begin{subfigure}{0.233\textwidth}
        \includegraphics[width=\linewidth]{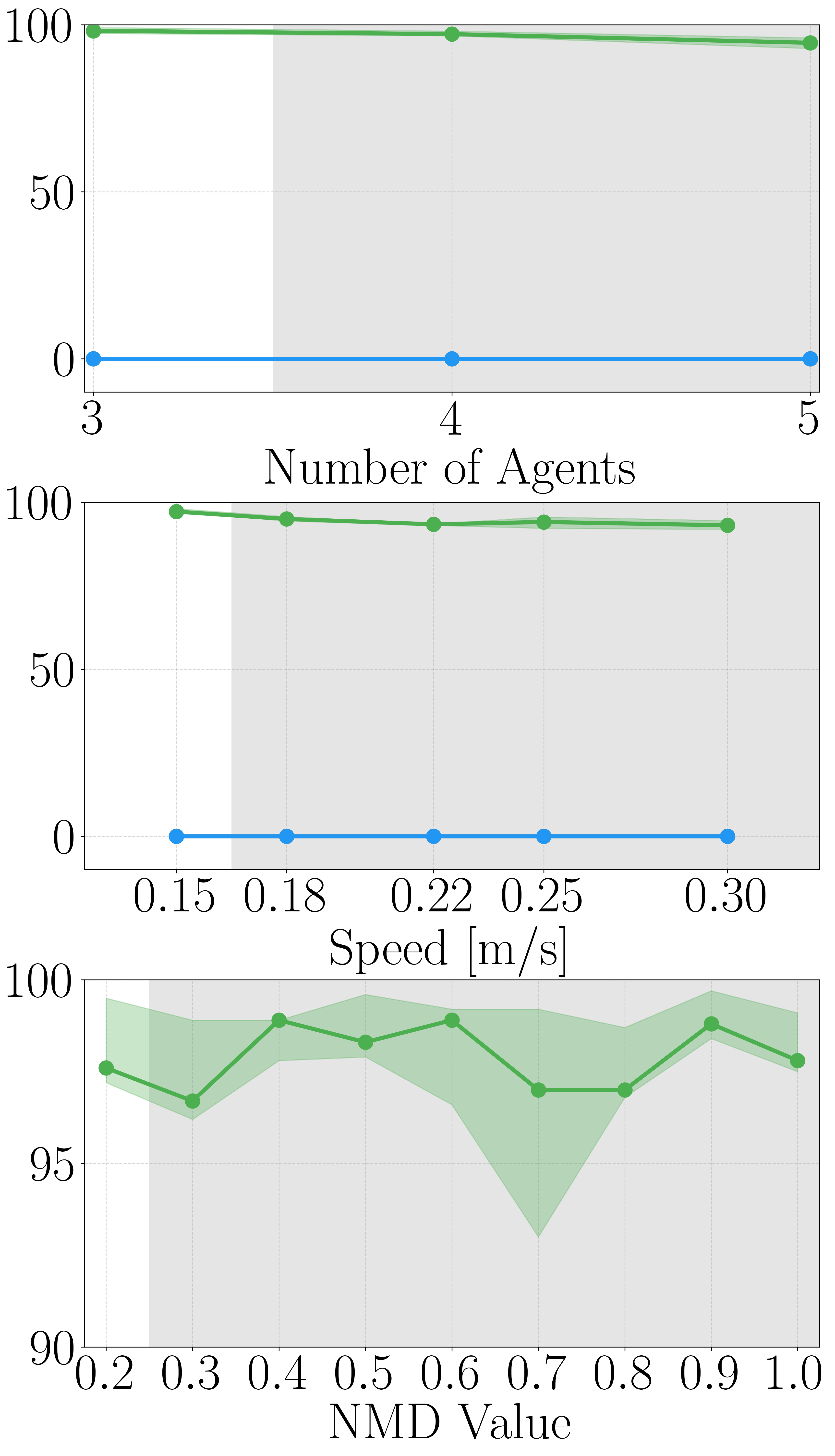}
        \caption{Football}
        \label{fig:foot}
    \end{subfigure}
    
    \caption{
    \textbf{Generalization.}
    Final completion rate across $5$ seeds for the three generalization axes. 
    White areas denote training distribution; gray areas denote zero-shot evaluation. 
    Colors represent methods: green (Ours), yellow (CH: CASH), blue (DC: DiCo), purple (PS: Parameter Sharing).
    }
    \label{fig:scaling_results}
\end{figure}
\subsection{Q3: Adaptive Behaviors in Response to Events}

To address \textbf{(Q3)}, we introduce the Pressure Plate task, which requires accurate event-based behavior allocation to succeed. In this scenario, agents must adopt multiple behaviors sequentially; for example, one agent must hold a pressure plate until teammates pass through a door, then transition to a new behavior once a second plate is secured on the opposite side. The results in Fig.~\ref{fig:results_completion} and~\ref{fig:results_reward} show that our approach is the only evaluated method that successfully completes the task.

To verify that this performance is driven by the event mechanism, we compare our approach to an ablation without event re-querying. This ablation fails because agents are statically assigned a single behavior, leaving them unable to adapt to the sequential requirements of the task, which results in the final agent remaining stuck behind the door. Interestingly, our event-based formulation enables sequential strategies without explicit time awareness, an emergent property from the decoupling of agent identity and behavior. This is further studied in Fig. \ref{fig:main}, that visualizes agent trajectories and a t-SNE projection of the behavior space $\mathcal{B}$. Across multiple stochastic runs, distinct clusters emerge, demonstrating that event-based re-querying induces specialized and consistent behavioral modes.

\begin{figure*}[t]
    \centering
    
    
    \begin{subfigure}[b]{0.32\textwidth}
        \centering
        \includegraphics[width=\linewidth]{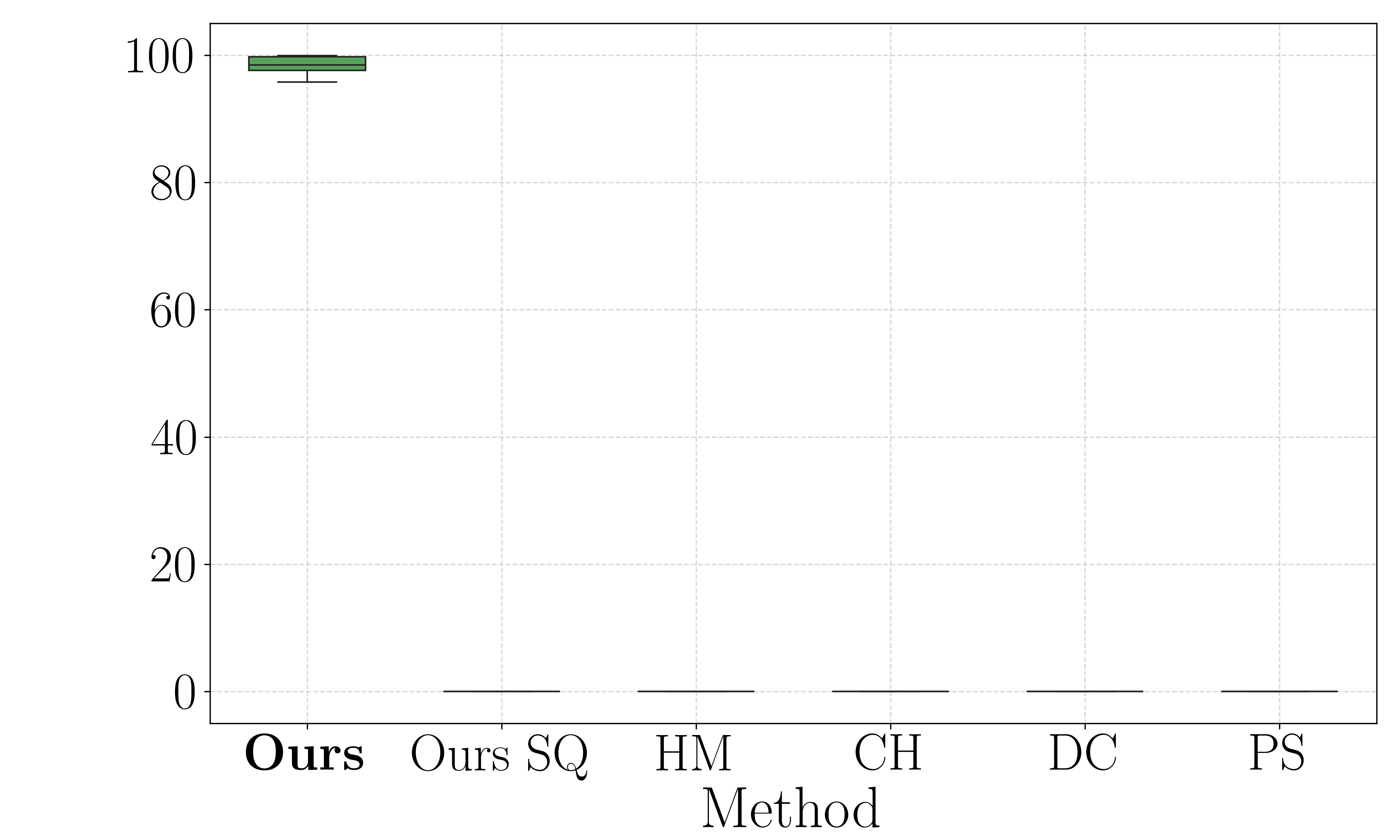}
        \caption{Pressure Plate - Completion {$[\%]$}}
        \label{fig:results_completion}
    \end{subfigure}\hfill
    \begin{subfigure}[b]{0.32\textwidth}
        \centering
        \includegraphics[width=\linewidth]{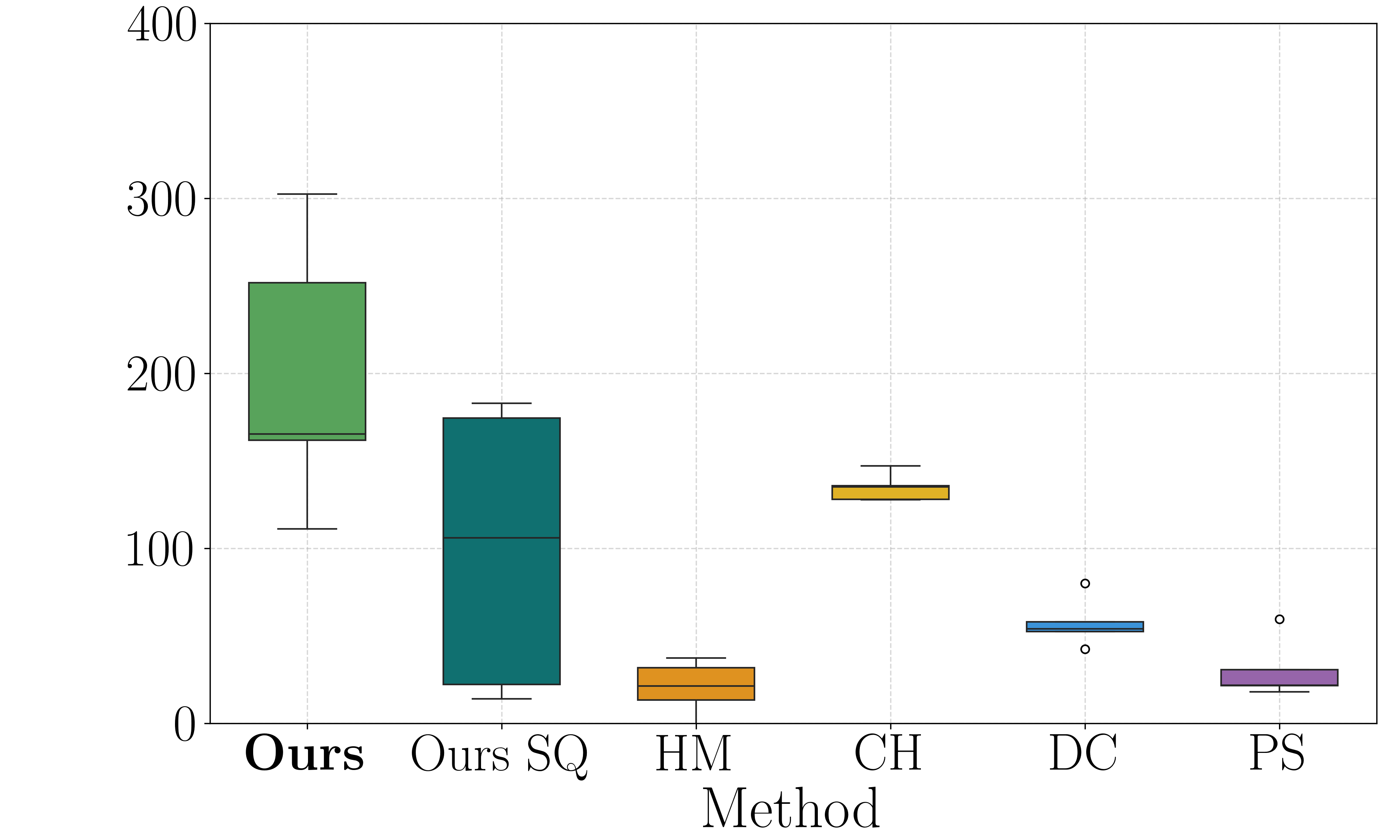}
        \caption{Pressure Plate - Reward}
        \label{fig:results_reward}
    \end{subfigure}\hfill
    \begin{subfigure}[b]{0.32\textwidth}
        \centering
        \vspace{0pt} 
        \resizebox{0.9\linewidth}{!}{
        \begin{tabular}{llc}
            \toprule
            \textbf{Environment} & \textbf{Agent Removal} & \textbf{Completion Rate} \\ 
            \addlinespace[0.5em]
            & & \textbf{Median$^{+\text{(Q3-M)}}_{-\text{(M-Q1)}}$ (\%)} \\
            \addlinespace[0.5em]
            \midrule
            Pressure Plate & No  & $98.4^{+1.4}_{-0.8}$ \\
            \addlinespace[0.5em]
            Pressure Plate & Yes & $81.2^{+0.9}_{-0.2}$ \\
            \midrule
            Football       & No  & $98.2^{+1.0}_{-0.8}$ \\
            \addlinespace[0.5em]
            Football       & Yes & $88.6^{+6.6}_{-0.7}$ \\
            \bottomrule
        \end{tabular}%
        }
        \vspace{0.4cm} 
        \caption{Agent Removal Event}
        \label{tab:combined_agent_removal}
    \end{subfigure}
    
    
    \begin{subfigure}[b]{\textwidth}
        \centering
        \includegraphics[width=0.21\textwidth]{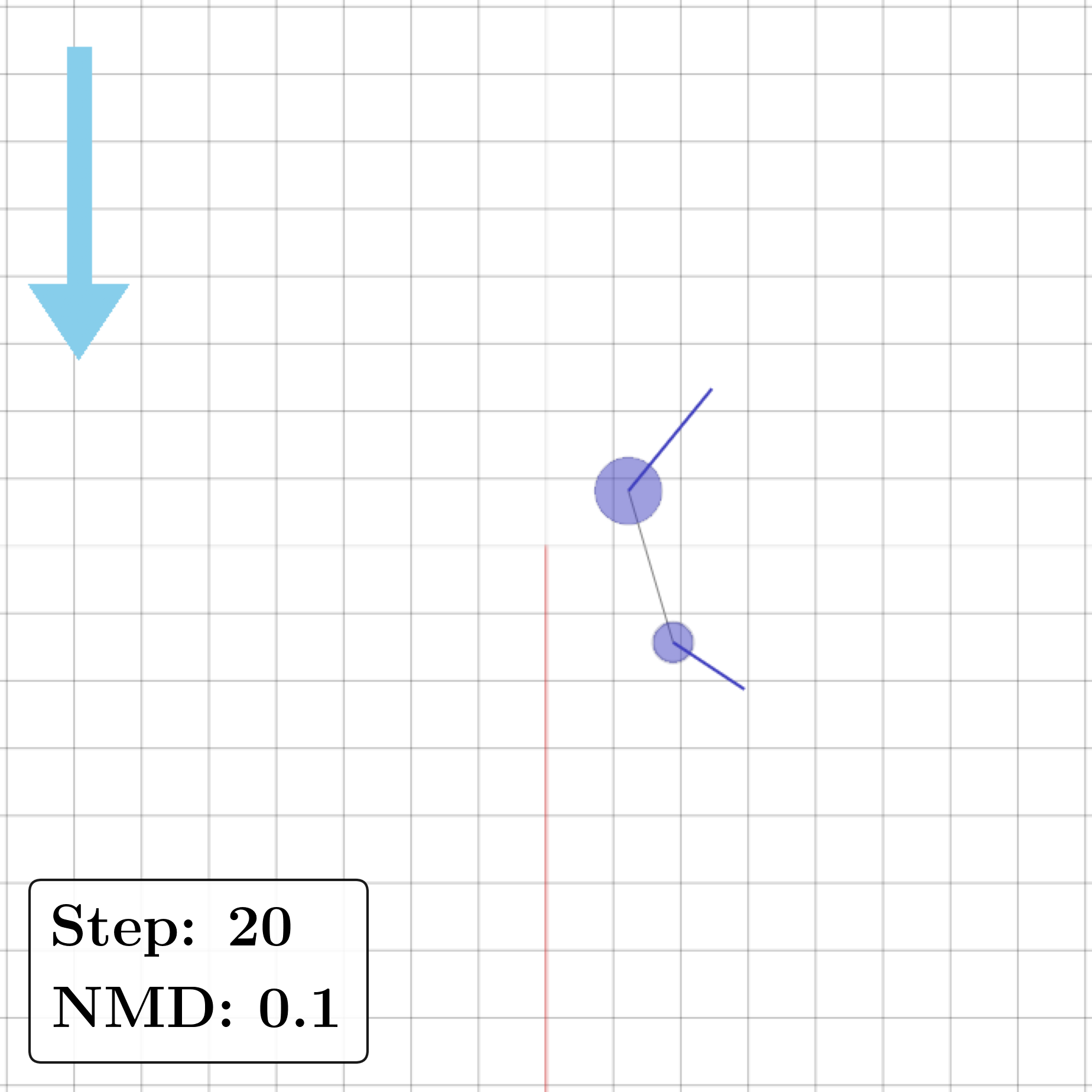}\hfill
        \includegraphics[width=0.21\textwidth]{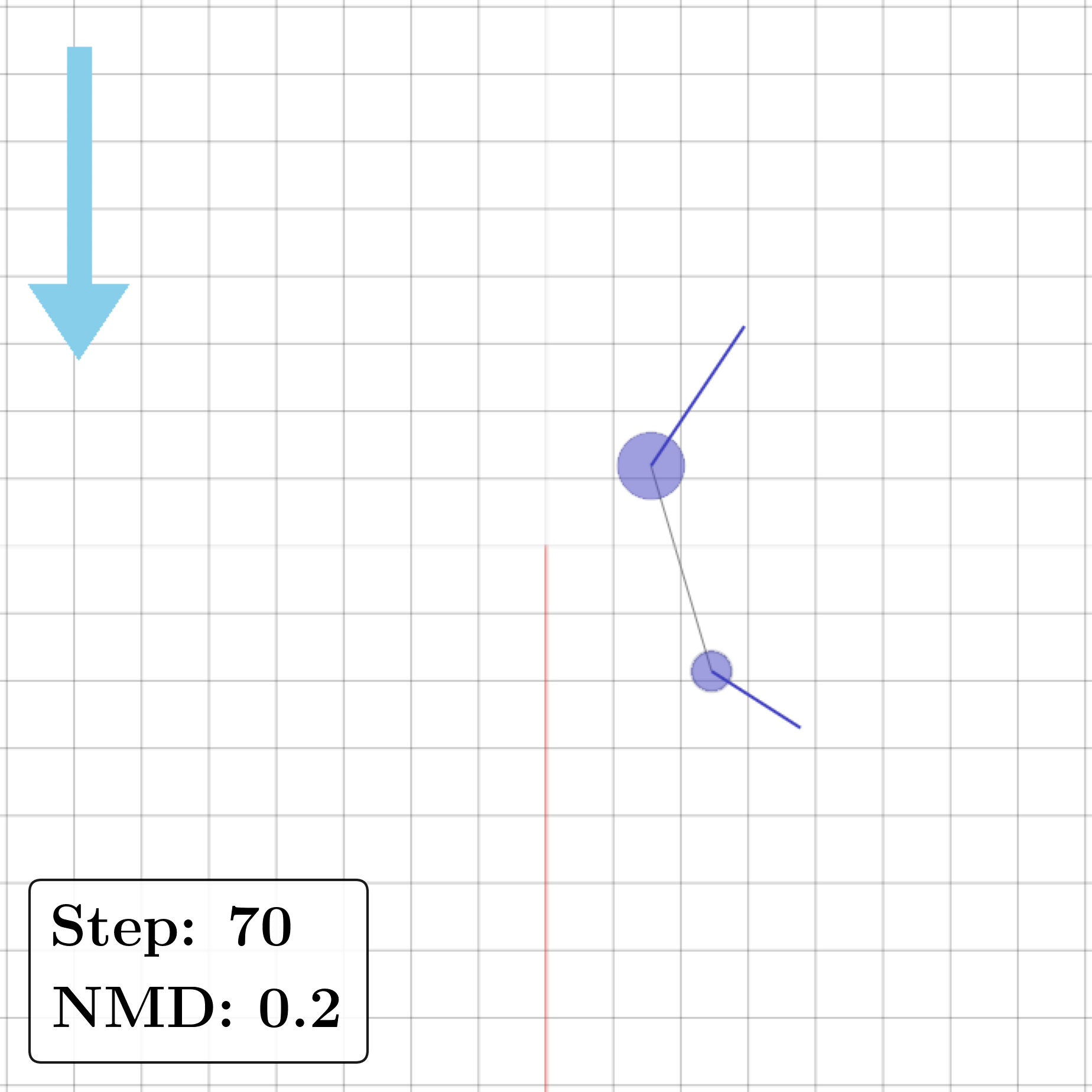}\hfill
        \includegraphics[width=0.21\textwidth]{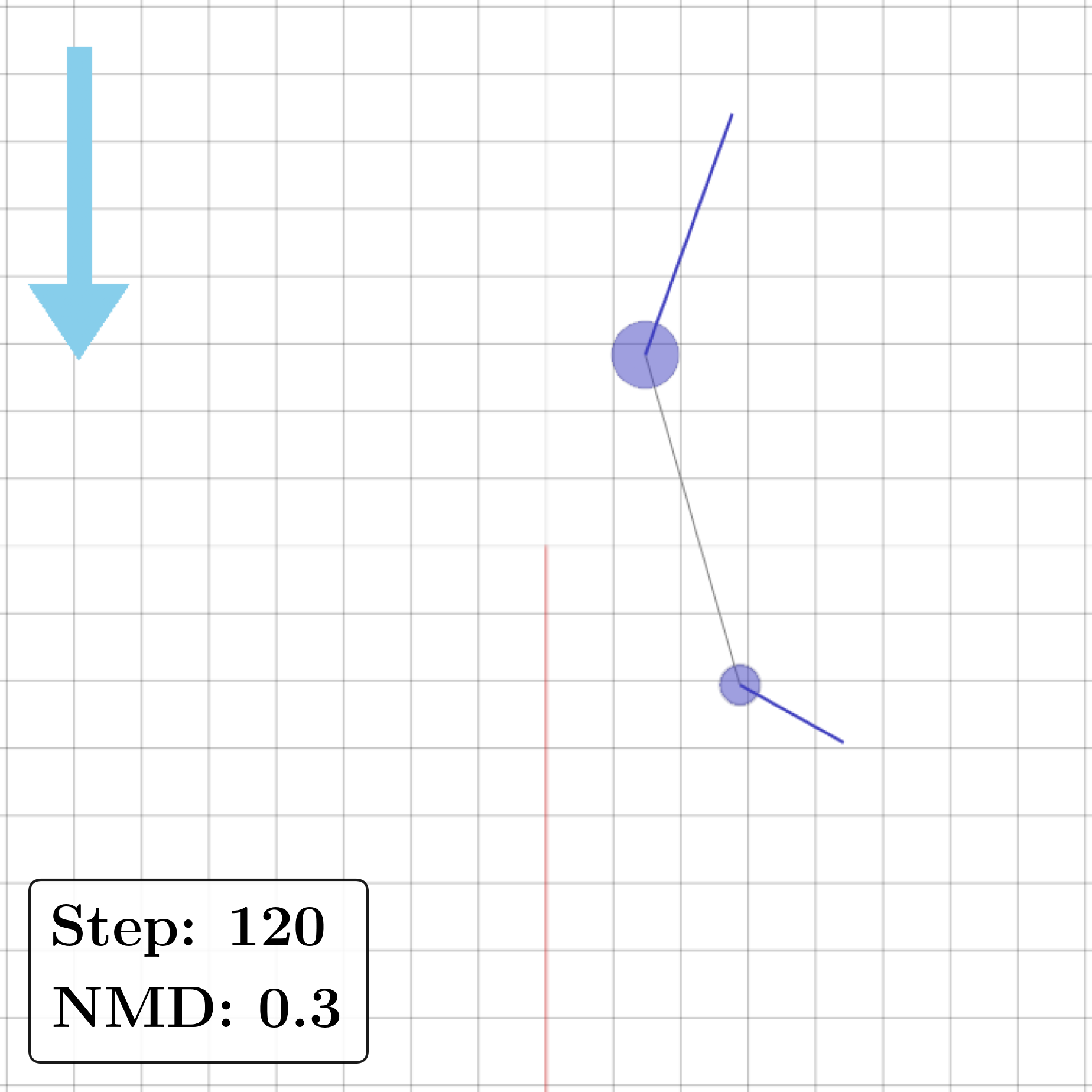}\hfill
        \includegraphics[width=0.21\textwidth]{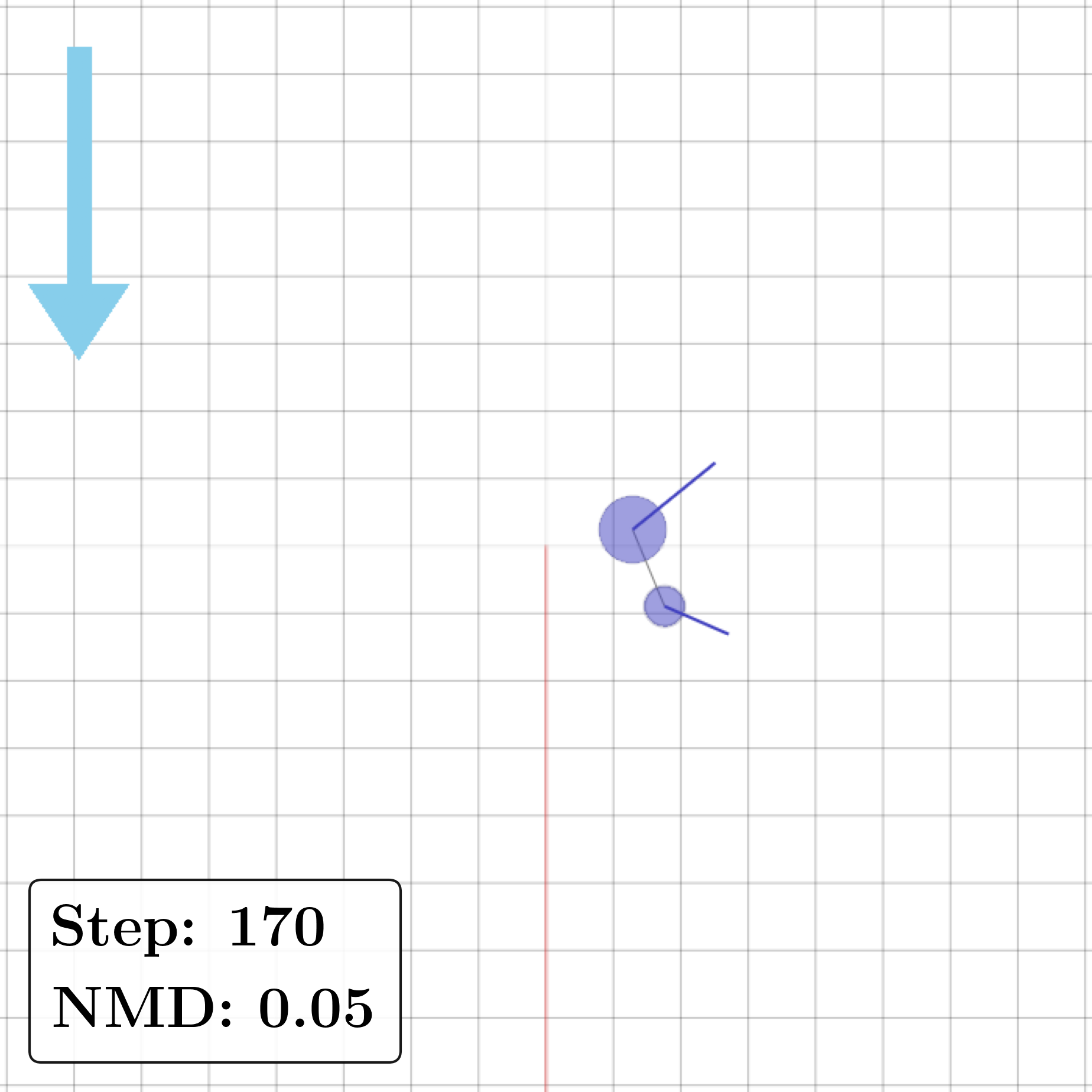}
        \caption{Qualitative visualization in Wind Flocking at steps 20, 70, 120, and 170.}
        \label{fig:windflock_sequence}
    \end{subfigure}
    
    \caption{
    \textbf{Study of Events.} \textbf{(a)} Completion rate and \textbf{(b)} average episode reward on Pressure Plate.
    Colors: green (Ours and SQ: Single Query), orange (HM: HyperMARL), yellow (CH: CASH), blue (DC: DiCo), purple (PS: Parameter Sharing).
    \textbf{(c)} Agent removal on Pressure Plate and Football. 
    \textbf{(d)} Visualization for four \metric{} values in Wind Flocking. The arrow indicates wind direction. }
\end{figure*}

\subsection{Q4: Generalization to Unseen Event Sequences}

To address \textbf{(Q4)}, we evaluate the Football, Pressure Plate, and Wind Flocking environments by introducing unseen event sequences. Specifically, in Football, we remove an agent at a random point; in Pressure Plate, we remove the agent that first reaches the second pressure plate; and in Wind Flocking, we randomly change the target diversity mid-episode. Results in Table~\ref{tab:combined_agent_removal} demonstrate that our method is robust to these additional events and successfully recovers, highlighting its ability to assign relevant behaviors following such events. Specifically, neither perturbation pushes performance into the regime occupied by the baselines in Section \ref{subsec:q1}, suggesting that the framework's response to unseen events is qualitatively close to redistributing roles.

Fig.~\ref{fig:windflock_sequence} visualizes the impact of diversity on the Wind Flocking environment: as the requested diversity decreases, the agents form tighter flocks, highlighting the direct effect of \metric{} on agent behavior. The four snapshots correspond to \metric{} targets that change mid-episode. The agent geometry tracks each change visibly: the inter-agent distance grows monotonically with \metric{} between steps $20$ and $120$, and contracts again at step $170$ once \metric{} is reduced below the initial value. Because the target diversity is supplied to the hypernetwork as an input scalar, this transition does not require retraining and confirms that $\mathrm{\metric{}}_{\mathrm{des}}$ acts as a direct, interpretable control on the realized formation rather than as a soft regularization target. Quantitative results are reported in Appendix~\ref{app:meanreward}.

\section{Conclusion, Limitations, and Future Work}
\label{sec:conc}

This work argues that behavioral adaptation in MARL is best understood as event-driven: triggered by task-relevant transitions rather than by agent identity or fixed temporal cadence. From this premise follows a structural decoupling in which behaviors are no longer tied to agents but instantiated from a continuous manifold the team shares. As a consequence, the meaning of diversity is shifted to a property of the task, not of the agents that populate it. We made this concrete through \longmetric{}, which we proved is a valid distance on behaviors, and through an event-driven hypernetwork that generates LoRA modules over a shared policy. We also proved that this implementation aligns reward maximization with diversity preservation by design. Our empirical results demonstrated that our framework facilitates more effective coordination and generalization compared to existing approaches, outperforming baselines. The benefits of an event-based formulation are further validated in a pressure plate task, in which sequential assignment emerges to solve the task despite the absence of a memory mechanism (e.g., RNN) that learns a state-machine representation of the task status.

Although this work establishes the foundation for an event-based view of behavioral adaptation in MARL, it presents limitations. The societal implications of these are discussed Appendix \ref{app:societal_impact}, while here we focus on the technical limitations. First, the expressivity of behaviors is inherently constrained by the choice of representation. We use an implicit behavior typing representation where behaviors are inferred from local observations and attention to teammate observations via a hypernetwork, which results in diversity expressed within the space of policy parameters. Consequently, the richness of the strategy space is tied to this specific representation. As an initial effort to investigate this constraint, we provide experiments regarding the dimensionality of our LoRA adapter in Appendix~\ref{app:lorarank}.
Second, we note that in our current formulation, events are defined a priori using domain knowledge. A compelling extension for future work involves inferring these events either end-to-end or through latent discovery. This logic similarly applies to the diversity target, as our method generalizes across a range of \metric{} values, yet the identification of optimal values for this metric could likely be automated through an additional optimization objective. Lastly, the realization of our framework through a hypernetwork is currently centralized, requiring access to the observations from all agents to assign behaviors. It would be interesting to derive an extension of our solution that is decentralized, in which agents decide on their behavior based only on neighboring information. 

\section*{Acknowledgements}
This work is supported by the European Research Council (ERC) Project 949940 (gAIa), by a Leverhulme Trust Research Project Grant, and by the EPSRC funded INFORMED-AI project EP/Y028732/1. We gratefully acknowledge their support.

\bibliographystyle{plainnat} 
\bibliography{references}    

\appendix

\section{Extended Related Work} \label{app:relatedwork}

\textbf{Temporal Behavioral Assignment.}
A core question for behavior allocation is the assignment timeline. Existing approaches fall into two categories. The first one comprises methods that fix one behavior per agent for the entire episode~\citep{Tessera2024, bettini2024controlling, jiang2021emergence, li2021celebrating, liu2025drol, christianos2021seps}, either because they learn per-agent policies~\citep{bettini2024controlling, jiang2021emergence, christianos2021seps}, or because they do not allow dynamically changing behavior~\citep{Tessera2024, li2021celebrating, liu2025drol}. 
The second category comprises methods that adapt behaviors on a fixed temporal schedule---either every timestep or every $c$ steps---regardless of task conditions~\citep{Fu2025, qi2026rois, mahajan2019maven, wang2020roma, wang2020rode, goel2025r3dm}.

\paragraph{Representations of Behaviors.}
Behavioral specialization is fundamentally constrained by the underlying behavior representation. 
A common paradigm is to learn explicit policies for each behavior, with methodologies ranging from learning deviations from a common policy~\citep{li2021celebrating, bettini2024controlling} to selective parameter sharing~\citep{christianos2021seps, kim2023networkpruning, li2024adaptive, jiang2021emergence}. However, these approaches do not easily accommodate dynamic behavior allocation as the learned behaviors remain fixed and tied to specific agents. 
Another frequent approach is to use a shared policy conditioned on agent identity, typically defined as one-hot encoding~\citep{Tessera2024}, or capabilities when available~\citep{Fu2025}. This method, also known as behavior typing~\citep{bettini2023heterogeneous}, has been applied through both direct policy conditioning~\citep{deka2021natural} and conditioned hypernetworks that indirectly predict policy parameters~\citep{Tessera2024, Fu2025}. However, it inherently limits scalability. 
A more flexible extension involves learning a latent behavior space (or role space) from which behaviors are sampled, either to condition a shared policy~\citep{mahajan2019maven, qi2026rois, goel2025r3dm} or a shared hypernetwork~\citep{wang2020roma, liu2025drol}. While this has shown promising results, it relies on training additional encoder networks and requires prior knowledge on the dimensionality and structure of the latent space to identify distinct behaviors. 
Alternatively,~\citet{wang2020rode} achieves specialization by partitioning the action space, though this requires fixing the decomposition early to maintain tractability.
In contrast to these explicitly conditioned or constrained methods, inferred behavior typing~\citep{gupta2017cooperative, bettini2023heterogeneous} derives behavior solely from environmental observations.

\paragraph{Quantifying and Enforcing Behavioral Diversity.}
While the previous section focuses on enabling diversity through representation, this is often insufficient to ensure functional divergence in practice. Consequently, while some works do not explicitly reward diversity~\citep{wang2020rode, christianos2021seps}, many works have proposed various methods to encourage diversity within teams. These methods generally fall into two categories: those that encourage diversity through auxiliary rewards or losses~\citep{wang2020roma, li2021celebrating, mahajan2019maven, jiang2021emergence, liu2025drol, qi2026rois, goel2025r3dm}, and those that enforce diversity as a constraint in the learning process~\citep{bettini2024controlling, tan2023policy}. While the first type of approach is easier to construct and optimize, it offers only weak guarantees on final team diversity. The second type provides a strong guarantee that diversity constraints will be satisfied, although it requires a reference target value to be known a priori. Additionally, researchers have proposed different ways to quantify this diversity. Many papers focus on information-theoretic objectives where the trajectories of a policy should carry maximum information about its behavior~\citep{wang2020roma, li2021celebrating, mahajan2019maven, jiang2021emergence, qi2026rois, goel2025r3dm}, requiring training models to differentiate trajectories. 
Alternatively,~\citep{liu2025drol} proposes to use a contrastive loss to distinguish behavior, which also requires a separate model to differentiate trajectories. 
Finally, \citet{bettini2023system} proposed System Neural Diversity (SND), an explicit diversity metric based on the action distribution of the policies.

\section{Proofs of Theoretical Results}\label{app:proofs}

\subsection{Notation and Assumptions}

\textbf{(A1) Behavior manifold.} The behavior manifold $\mathcal{B}$ is a measurable space of conditional action distributions $\pi : \mathcal{O} \to \mathcal{P}(\mathcal{A})$, where $\mathcal{P}(\mathcal{A})$ denotes the set of Borel probability measures on the action space $\mathcal{A}$. We assume $\mathcal{A} \subseteq \mathbb{R}^{d_a}$ is equipped with the Euclidean norm $\|\cdot\|$, and that for every $\pi \in \mathcal{B}$ and every $o \in \mathcal{O}$, the action distribution $\pi(o)$ has finite second moment.

\textbf{(A2) Observation distribution.} Observations $o \in \mathcal{O}$ are drawn from a probability density $p(o)$, and behaviors $\pi \in \mathcal{B}$ are drawn from a density $p(\pi)$. All integrals over $\mathcal{O}$ and $\mathcal{B}$ are taken with respect to these densities.

\textbf{(A3) Gaussian policy parameterization.} When invoked (Lemma \ref{lemma:w2} onward), each behavior $\pi_m$ is a Gaussian policy $\pi_m(\cdot \mid o) = \mathcal{N}(\mu_m(o), \Sigma)$, with mean $\mu_m(o) \in \mathbb{R}^{d_a}$ and a covariance $\Sigma \succ 0$ that is shared across behaviors and independent of $o$.

\textbf{(A4) Linear LoRA parameterization.} The mean of behavior $m$ is decomposed as \mbox{$\mu_m(o) = W_{\mathrm{shared}}\, \phi(o) + \alpha\, u_m(o)$, $u_m(o) := D_m C_m\, \phi(o)$}, where $\phi(o) \in \mathbb{R}^d$ is a shared feature, $W_{\mathrm{shared}} \in \mathbb{R}^{d_a \times d}$ is the shared backbone, $C_m \in \mathbb{R}^{r \times d}$ and $D_m \in \mathbb{R}^{d_a \times r}$ are the LoRA factors, and $\alpha \in \mathbb{R}_{>0}$ is the diversity scalar. The deviation $u_m(o)$ is deterministic.

\textbf{(A5) Differentiability.} The expected return $R$ is differentiable in the LoRA parameters $\{C_m, D_m\}$ and in the shared parameters via $\phi$ and $W_{\mathrm{shared}}$, and the empirical \metric{} estimator $\widehat{\mathrm{\metric{}}}$ defined in Eq. \eqref{eq:nmd_approx} of the main text is differentiable in the deviations $\{u_m\}$ on the open set $\widehat{\mathrm{\metric{}}} > 0$. For a fixed observation $o$, we abbreviate $u_m := u_m(o)$, $\mu_m := \mu_m(o)$, $\phi := \phi(o)$, and $z_m := \mu_m$ (the policy mean coincides with the pre-activation in (A3)–(A4)).

\subsection{Proof of Proposition~\ref{prop:nmd_distance}}
\begin{proof}
We verify the four metric axioms.
    
\textbf{Non-negativity.} For every $o$, the 2-Wasserstein distance satisfies $W_2(\pi_m(o), \pi_n(o)) \geq 0$ (it is a metric on $\mathcal{P}_2(\mathcal{A})$ by \citep{villani2009optimal} Thm. 6.18). The expectation of a non-negative integrand is non-negative, so $d(\pi_m, \pi_n) \geq 0$.

\textbf{Symmetry.} $W_2$ is symmetric in its arguments, hence $W_2(\pi_m(o), \pi_n(o)) = W_2(\pi_n(o), \pi_m(o))$ point-wise in $o$, and the expectation preserves equality.

\textbf{Triangle inequality.} Fix any $\pi_m, \pi_n, \pi_k \in \mathcal{B}$. Point-wise in
$o$, the triangle inequality for $W_2$ gives $$W_2(\pi_m(o), \pi_k(o)) \leq W_2(\pi_m(o), \pi_n(o)) + W_2(\pi_n(o), \pi_k(o)).$$ Taking expectation under $p(o)$ and using linearity and monotonicity of expectation:
$$d(\pi_m, \pi_k) \leq d(\pi_m, \pi_n) + d(\pi_n, \pi_k).$$

\textbf{Identity.} If $\pi_m = \pi_n$ in $\mathcal{B}$, then $\pi_m(o) = \pi_n(o)$ for all $o$, hence $W_2(\pi_m(o), \pi_n(o)) = 0$ point-wise and $d(\pi_m, \pi_n) = 0$. Conversely, $d(\pi_m, \pi_n) = 0$ implies $W_2(\pi_m(o), \pi_n(o)) = 0$ for $p$-almost every $o$, hence $\pi_m(o) = \pi_n(o)$ for $p$-almost every $o$. This gives identity up to $p$-null sets, which establishes the pseudometric property. Under the additional separation assumption the converse is strengthened to $\pi_m = \pi_n$, yielding a genuine metric. 
\end{proof}

\begin{corollary}
Under the separation assumption, $\mathrm{\metric{}}(\mathcal{B}) = 0$ if and only if $p(\pi)$ concentrates on a single behavior. The forward direction follows from $d(\pi_m, \pi_n) = 0 \iff \pi_m = \pi_n$; the converse from non-negativity of $d$ and the integral definition of $\mathrm{\metric{}}$.
\end{corollary}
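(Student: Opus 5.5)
The statement just before the paper breaks off is the unnumbered corollary: under the separation assumption, $\mathrm{\metric{}}(\mathcal{B}) = 0$ if and only if $p(\pi)$ concentrates on a single behavior. I will prove the two directions separately. Both rest on Proposition~\ref{prop:nmd_distance} and on the identity
\[
\mathrm{\metric{}}(\mathcal{B}) = \int_{\mathcal{B}}\int_{\mathcal{B}} d(\pi_m,\pi_n)\, p(\pi_m)\,p(\pi_n)\, d\pi_m\, d\pi_n ,
\]
which follows from Eq.~\eqref{eq:og_nmd} by Tonelli's theorem. Tonelli applies because the integrand $W_2 \ge 0$ is non-negative, so the order of integration can be exchanged.

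\emph{Converse direction (concentration implies zero).} Suppose $p(\pi)$ is the point mass at some $\pi^\star$. Then the product measure $p\otimes p$ is the point mass at $(\pi^\star,\pi^\star)$. The double integral therefore reduces to $d(\pi^\star,\pi^\star)$, which is $0$ by the identity part of Proposition~\ref{prop:nmd_distance}.

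\emph{Forward direction (zero implies concentration).} Assume $\mathrm{\metric{}}(\mathcal{B})=0$. The steps are as follows.
\begin{itemize}
\item Since $d\ge 0$ and its integral against $p\otimes p$ vanishes, $d(\pi_m,\pi_n)=0$ for $p\otimes p$-almost every pair.
\item Under separation, $d$ is a genuine metric, so $\pi_m=\pi_n$ for $p\otimes p$-almost every pair. Equivalently, the diagonal $\Delta=\{(\pi,\pi)\}$ has full measure: $(p\otimes p)(\Delta)=1$.
\item By Fubini, $(p\otimes p)(\Delta)=\int_{\mathcal{B}} p(\{\pi\})\,dp(\pi)=\sum_{\pi} p(\{\pi\})^2$, where the sum runs over the atoms of $p$.
\item On the other hand, $\sum_\pi p(\{\pi\})^2 \le \max_\pi p(\{\pi\})\cdot\sum_\pi p(\{\pi\}) \le 1$. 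Equality throughout forces a single atom of mass $1$, i.e.\ $p$ is a Dirac measure.
\end{itemize}

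\emph{Main obstacle: measure-theoretic hygiene.} The diagonal must be measurable in $\mathcal{B}\times\mathcal{B}$. I would get this by giving $\mathcal{B}$ the Borel $\sigma$-algebra of the metric $d$ and assuming separability, since in a separable metric space the diagonal is Borel. There is also a notational tension: the paper writes $p$ as a density. If $p$ were truly absolutely continuous with respect to a non-atomic reference measure, then the diagonal would have measure zero, so $\mathrm{\metric{}}=0$ could not occur at all except in the degenerate case. I will therefore state explicitly that $p$ is read as a general probability measure on $\mathcal{B}$ for this corollary.
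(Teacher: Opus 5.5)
Your proof is correct and uses the same ingredients as the paper's one-line argument: identity of indiscernibles under separation, non-negativity of $d$, and the integral form of \metric{}. It also makes explicit that non-negativity is what turns $\mathrm{NMD}=0$ into $d=0$ for $p\otimes p$-almost every pair, and it supplies the step the paper leaves implicit, namely that full $p\otimes p$-mass on the diagonal forces a Dirac measure via $\sum_{\pi} p(\{\pi\})^2=1$. (Your separability assumption is harmless but unnecessary: the joint measurability your Tonelli step presupposes already makes $d$ measurable on $\mathcal{B}\times\mathcal{B}$, so under separation $\Delta=\{(\pi_m,\pi_n): d(\pi_m,\pi_n)=0\}$ is product-measurable and each singleton $\{\pi\}$, being a section of $\Delta$, is measurable too.)
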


\subsection{Proof of Theorem \ref{theorem:joint}} 

Before proving the main result of the theorem, we state and prove a Lemma that helps to derive the connection between diversity in the behavior manifold and reward maximization.

\begin{lemma}\label{lemma:w2}
    Let $\pi_m$ be a Gaussian policy with mean $\mu_m$ and covariance $\Sigma$ shared across behaviors. If the heterogeneous component $u_m(o_t)$ enters $\mu_m$ linearly and is deterministic, then \mbox{$W_2(\pi_m, \pi_n) = \|\mu_m - \mu_n\|_2$}.
\end{lemma}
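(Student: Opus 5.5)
We will prove the identity by exhibiting an explicit coupling that attains $\|\mu_m-\mu_n\|_2$ and then a matching lower bound valid for every coupling. Throughout we fix an observation $o$ and work under (A3)--(A4). Both $\pi_m(o)$ and $\pi_n(o)$ are then Gaussians $\mathcal{N}(\mu_m,\Sigma)$ and $\mathcal{N}(\mu_n,\Sigma)$ sharing the covariance $\Sigma$. Because $u_m$ is deterministic and enters linearly, the only difference between the two behaviors is a deterministic shift of the mean:
\[
\mu_m-\mu_n=\alpha(u_m-u_n).
\]
In other words, $\pi_m(o)$ is the pushforward of $\pi_n(o)$ under the translation $T(x)=x+(\mu_m-\mu_n)$. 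This is the only place the linearity and determinism hypotheses are used. Any randomness or nonlinearity in the LoRA branch would alter the covariance, and the two distributions would then no longer be translates of one another.

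\textbf{Upper bound.} Draw $X\sim\mathcal{N}(\mu_n,\Sigma)$ and set $Y=T(X)$. Then $(Y,X)$ is a coupling of $\pi_m(o)$ and $\pi_n(o)$. Since $\|Y-X\|=\|\mu_m-\mu_n\|$ almost surely, this gives
\[
W_2^2\le\|\mu_m-\mu_n\|^2 .
\]

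\textbf{Lower bound.} Let $(Y,X)$ be any coupling of $\pi_m(o)$ and $\pi_n(o)$. Decompose $Y-X=(\mu_m-\mu_n)+\big((Y-\mu_m)-(X-\mu_n)\big)$, where the second term has mean zero. Expanding the square, the cross term vanishes in expectation, so
\[
\mathbb{E}\|Y-X\|^2=\|\mu_m-\mu_n\|^2+\mathbb{E}\|(Y-\mu_m)-(X-\mu_n)\|^2\ge\|\mu_m-\mu_n\|^2 .
\]
Finite second moments, guaranteed by (A1), justify the expansion. Equivalently, Jensen's inequality gives $\mathbb{E}\|Y-X\|^2\ge\|\mathbb{E}[Y-X]\|^2$. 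Taking the infimum over couplings and combining with the upper bound yields $W_2(\pi_m(o),\pi_n(o))=\|\mu_m-\mu_n\|_2$.

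As a cross-check, the result agrees with the closed-form Gaussian $W_2$ formula, whose Bures term
\[
\mathrm{tr}\big(\Sigma_1+\Sigma_2-2(\Sigma_2^{1/2}\Sigma_1\Sigma_2^{1/2})^{1/2}\big)
\]
vanishes when $\Sigma_1=\Sigma_2$. We nonetheless prefer the elementary coupling argument, since it avoids quoting that formula. There is no substantive obstacle in the proof itself. The one delicate point is the interpretive one: the lemma concerns the pre-squashing Gaussian mean $z_m=\mu_m$ of (A5). If the squashing map were applied, the pushforwards would no longer be translates of each other and the identity would fail, so we state explicitly that $W_2$ is taken on the Gaussian before the activation. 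With that, $\widehat{\mathrm{\metric{}}}$ becomes an average of $\alpha\|u_m-u_n\|$ terms, which is the form needed for the proof of Theorem~\ref{theorem:joint}.
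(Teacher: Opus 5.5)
Your proof is correct, but it takes a different route from the paper. The paper invokes the closed-form $W_2$ formula for Gaussians, $W_2^2 = \|\mu_m-\mu_n\|_2^2 + \mathrm{tr}\bigl(\Sigma_m+\Sigma_n-2(\Sigma_m^{1/2}\Sigma_n\Sigma_m^{1/2})^{1/2}\bigr)$. It notes that for $\Sigma_m=\Sigma_n=\Sigma\succ 0$ the trace term is $2\Sigma-2(\Sigma^2)^{1/2}=0$, and takes square roots. That is exactly what you relegate to a cross-check. You instead exhibit the optimal plan, the translation $T$, for the upper bound, and use $\mathbb{E}\|Y-X\|^2\ge\|\mathbb{E}[Y-X]\|^2$ for the lower bound.

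The paper's version is a one-liner but rests on a nontrivial external result. Yours is self-contained and more general, since it never uses Gaussianity or $\Sigma\succ 0$. It shows $W_2(\nu, T_{\#}\nu)=\|c\|$ for any law $\nu$ with finite second moment and any translation $T(x)=x+c$. So it covers any shared-noise location family, including degenerate $\Sigma$. Its lower bound $W_2\ge\|\mathbb{E}Y-\mathbb{E}X\|$ also holds for arbitrary pairs of distributions.

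One small correction to your commentary concerns the claim that nonlinearity in the LoRA branch would change the covariance. A deterministic \emph{nonlinear} dependence of $\mu_m$ on $u_m$ would not alter the covariance, so the identity $W_2=\|\mu_m-\mu_n\|_2$ would survive it. Linearity (A4) is needed only afterwards, to rewrite $\mu_m-\mu_n=\alpha(u_m-u_n)$ in Eq.~\eqref{eq:app_1}. It plays no role in either your proof or the paper's proof of the lemma itself.
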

\begin{proof}
The closed-form expression for the 2-Wasserstein distance between two non-degenerate Gaussians on $\mathbb{R}^{d_a}$ is
$$W_2^2\!\left(\mathcal{N}(\mu_m, \Sigma_m), \mathcal{N}(\mu_n, \Sigma_n)\right) = \|\mu_m - \mu_n\|_2^2 + \mathrm{tr}\!\left(\Sigma_m + \Sigma_n - 2\,(\Sigma_m^{1/2}\Sigma_n \Sigma_m^{1/2})^{1/2}\right),$$ a result due to \cite{olkin1982distance}. Under (A3), $\Sigma_m = \Sigma_n = \Sigma$, so $$\Sigma_m + \Sigma_n - 2(\Sigma_m^{1/2}\Sigma_n \Sigma_m^{1/2})^{1/2} = 2\Sigma - 2(\Sigma^{1/2}\Sigma\,\Sigma^{1/2})^{1/2} = 2\Sigma - 2\Sigma = 0,$$ where we used that for $\Sigma \succ 0$, $\Sigma^{1/2} \Sigma\, \Sigma^{1/2} = \Sigma^2$ and $(\Sigma^2)^{1/2} = \Sigma$. The trace term vanishes, leaving
$$W_2^2(\pi_m(o), \pi_n(o)) = \|\mu_m(o) - \mu_n(o)\|_2^2.$$ Taking square roots (both sides are non-negative) gives the claim. 
\end{proof}

Substituting Lemma 2 into Eq. \eqref{eq:nmd_approx} of the main text and using $\mu_m(o) - \mu_n(o) = \alpha\,(u_m - u_n)$ from (A4), we obtain
\begin{equation}\label{eq:app_1}
\widehat{\mathrm{\metric{}}}\!\left(\{\pi_m\}_{i=1}^b\right) = \frac{2\,\alpha}{b(b-1)\,|\mathcal{O}_{\mathrm{ep}}|} \sum_{i<j} \sum_{o \in \mathcal{O}_{\mathrm{ep}}} \|u_m - u_n\|_2.
\end{equation}
This makes explicit that $\widehat{\mathrm{\metric{}}}$ is a homogeneous function of degree 1 in the deviations $\{u_m\}$, result that leads to the proof of Theorem \ref{theorem:joint}.

\begin{proof}
We compute $\nabla_{u_m} R$ via the chain rule, treating $\alpha$ as a function of $\{u_n\}_{j=1}^b$ through Eq. \eqref{eq:app_1}.

\textbf{Step 1.} From (A4), $z_m = W_{\mathrm{shared}}\phi + \alpha u_m$. The dependence of $z_m$ on $u_m$ has two contributions: the explicit $\alpha u_m$ term, and the implicit dependence of $\alpha$ on $u_m$ through $\widehat{\mathrm{\metric{}}}$. Treating $u_m$ as a vector in $\mathbb{R}^{d_a}$: $$\frac{\partial z_m}{\partial u_m} = \alpha\, I + u_m \otimes \nabla_{u_m} \alpha,$$ where $\otimes$ denotes the outer product (we use the convention $(a \otimes b)\,v = a\,(b^\top v)$. Substituting $\alpha = \mathrm{\metric{}}_{\mathrm{des}}/\widehat{\mathrm{\metric{}}}$ and applying the quotient rule:
$$\nabla_{u_m} \alpha = -\frac{\mathrm{\metric{}}_{\mathrm{des}}}{\widehat{\mathrm{\metric{}}}^{\,2}}\, \nabla_{u_m} \widehat{\mathrm{\metric{}}} = -\frac{\alpha}{\widehat{\mathrm{\metric{}}}}\, \nabla_{u_m} \widehat{\mathrm{\metric{}}}.$$ Therefore,
$$\frac{\partial z_m}{\partial u_m} = \alpha\, I - \frac{\alpha}{\widehat{\mathrm{\metric{}}}}\, u_m \otimes \nabla_{u_m} \widehat{\mathrm{\metric{}}} = \alpha\, P_{u_m},$$ with $P_{u_m}$ as in Theorem \ref{theorem:joint}.

\textbf{Step 2.} By the chain rule, $$\nabla_{u_m} R = \left(\frac{\partial z_m}{\partial u_m}\right)^\top \nabla_{z_m} R = \alpha\, P_{u_m}^\top\, \nabla_{z_m} R.$$ Because $P_{u_m}$ is symmetric in the dyadic structure that matters for the projection identity below (we verify idempotency on the same form), we drop the transpose for notational convenience and write $$\nabla_{u_m} R = \alpha\, P_{u_m}\, \nabla_{z_m} R.$$ 

\textbf{Step 3.} Using Eq. \eqref{eq:app_1}, $\widehat{\mathrm{\metric{}}}_0$ is positively homogeneous of degree 1 in $u_m$. By Euler's homogeneous function theorem,
\begin{equation}\label{eq:app_4}
u_m^\top\, \nabla_{u_m} \widehat{\mathrm{\metric{}}} = \widehat{\mathrm{\metric{}}}.
\end{equation}
Now compute $P_{u_m}^2$. Writing $v := \nabla_{u_m} \widehat{\mathrm{\metric{}}}$ and $N := \widehat{\mathrm{\metric{}}}$ for brevity,
$$P_{u_m}^2 = \left(I - \frac{u_m v^\top}{N}\right)\!\left(I - \frac{u_m v^\top}{N}\right) = I - \frac{2\,u_m v^\top}{N} + \frac{u_m (v^\top u_m)\, v^\top}{N^2}.$$
By Eq. \eqref{eq:app_4}, $ v^\top u_m = N$, so the last term simplifies:
$$\frac{u_m (v^\top u_m)\, v^\top}{N^2} = \frac{u_m\, N\, v^\top}{N^2} = \frac{u_m v^\top}{N}.$$
Substituting back,
$$P_{u_m}^2 = I - \frac{2\,u_m v^\top}{N} + \frac{u_m v^\top}{N} = I - \frac{u_m v^\top}{N} = P_{u_m}.$$
Thus $P_{u_m}$ is idempotent.
\end{proof}

\subsection{Proof of Corollary \ref{corollary:limits}}
\begin{proof}
(i) The empirical \metric{} is the symmetric pairwise sum (Eq. \eqref{eq:app_1})
$$\widehat{\mathrm{\metric{}}} \propto \sum_{j \neq k} \|u_n - u_k\|_2 \Big/ \,\big[\text{normalization}\big].$$
Differentiating with respect to $u_m$:
$$\nabla_{u_m} \widehat{\mathrm{\metric{}}} = c \sum_{j \neq i} \frac{u_m - u_n}{\|u_m - u_n\|_2},$$
for a positive constant $c$ depending only on $B$ and $|\mathcal{O}_{\mathrm{ep}}|$. As $\|u_m\| \to 0$ (with all $\|u_n\|$ for $j \neq i$ bounded away from zero), the summand vectors $(u_m - u_n)/\|u_m - u_n\|_2$ remain bounded in norm by $1$, but the second-moment contribution to $P_{u_m}$ is the rank-one term $u_m \otimes \nabla_{u_m}\widehat{\mathrm{\metric{}}} / \widehat{\mathrm{\metric{}}}$. The numerator factor $u_m \to 0$ drives this rank-one term to zero in operator norm, while the denominator $\widehat{\mathrm{\metric{}}}$ remains bounded below by a positive constant (since the remaining behaviors maintain non-trivial pairwise distances). Hence $P_{u_m} \to I$.

(ii) Write $u_m = t\hat{u}_m$ with $t \to \infty$ and $\hat{u}_m$ fixed. For large $t$, the dominant pairs $(u_m, u_n)$ in $\widehat{\mathrm{\metric{}}}$ are those involving $u_m$, since $\|u_m - u_n\| \sim t$ while $\|u_n - u_k\|$ for $j, k \neq i$ remains $O(1)$. Asymptotically,
$$\widehat{\mathrm{\metric{}}} = \frac{c\,(B-1)}{\,B(B-1)/2\,}\, t + O(1) = \frac{2c}{B}\, t + O(1),$$
and
$$\nabla_{u_m} \widehat{\mathrm{\metric{}}} = c \sum_{j \neq i} \frac{u_m - u_n}{\|u_m - u_n\|_2} = c\,(B-1)\,\hat{u}_m + O(1/t).$$
Substituting into the expression for $P_{u_m}$ in Theorem \ref{theorem:joint} and using $u_m = t\hat{u}_m$:
$$P_{u_m} = I - \frac{(t\hat{u}_m)\,(c(B-1)\hat{u}_m + O(1/t))^\top}{(2c/B)\,t + O(1)} = I - \frac{c(B-1)\,t\,\hat{u}_m \hat{u}_m^\top}{(2c/B)\,t} + O(1/t).$$
Canceling $t$ in the leading term:
$$P_{u_m} \;\longrightarrow\; I - \frac{B(B-1)}{2}\,\hat{u}_m \hat{u}_m^\top \quad \text{as } t \to \infty.$$ 
This is the orthogonal projector onto $\hat{u}_m^\perp$.
\end{proof}

\subsection{Bounded Action Spaces and Saturated Activations}
Lemma \ref{lemma:w2} assumes a Gaussian policy with linear mean. In practice, continuous-control policies typically apply a squashing nonlinearity such as $\tanh$ to enforce bounded actions, yielding \mbox{$a = \tanh(z)$ with $z \sim \mathcal{N}(\mu, \Sigma)$}. We discuss the impact on the preceding results.

\textbf{Effect on Lemma \ref{lemma:w2}.} The Wasserstein distance between two squashed Gaussians no longer admits a closed form: the push-forward of a Gaussian under $\tanh$ is non-Gaussian, and $W_2$ between such distributions must be computed numerically. However, in the unsaturated regime a standard Lipschitz inequality gives
$$W_2(\pi_m \circ \tanh^{-1}, \pi_n \circ \tanh^{-1}) \leq L_{\tanh}\, W_2(\pi_m, \pi_n) = L_{\tanh}\, \|\mu_m - \mu_n\|_2,$$
with $L_{\tanh} = 1$. \metric{} computed in pre-activation space therefore upper-bounds \metric{} computed in action space, and the relative ordering of behaviors is preserved.

\textbf{Effect on Theorem \ref{theorem:joint}.} The chain rule of Step 1 acquires an extra Jacobian factor $\mathrm{diag}(\mathrm{sech}^2(z_m))$. In the unsaturated regime this Jacobian is close to the identity and the projection structure of $P_{u_m}$ is preserved with multiplicative error $O(\|z_m\|^2)$. In the saturated regime the gradient flow $\nabla_{z_m} R$ along those coordinates is suppressed by the activation rather than the projector, so the projection $P_{u_m}$ contributes no further regulation in those directions. The corollary's qualitative regimes (vanishing/dominant deviation) carry over coordinate-wise, with the saturation acting as an additional, complementary gating mechanism rather than as a violation of the analysis.

\textbf{Effect on Corollary \ref{corollary:limits}.} Vanishing-deviation behavior (regime (i)) is unaffected, since \mbox{$\mu_m \to W_{\mathrm{shared}}\phi$} remains in the unsaturated regime for typical feature scales. Dominant-deviation behavior (regime (ii)) is strengthened: as $\|u_m\| \to \infty$, the squashing nonlinearity contributes additional suppression on top of the projector, ensuring that pathologically large LoRA outputs cannot escape regulation through the action nonlinearity.

\section{Experimental Details}
\label{app:expdetails}
We evaluate our approach across six distinct multi-agent tasks that challenge various aspects of coordination, observability, and physical interaction.

\textbf{Dispersion}: all agents spawn at a the center of the arena while goals are distributed randomly. The complexity lies in the fact that with total observability and a shared starting point, agents must coordinate to ensure they do not all converge to the same closest goal. The optimal policy requires high behavioral diversity to disperse and tackle different goals simultaneously (Fig.~\ref{fig:dispersion}). 

\textbf{Navigation}: introduces limited observability, requiring agents to utilize LiDAR sensors to explore the area and locate specific goals. Agents must navigate efficiently under time constraints, relying on LiDAR to locate their specific targets in an environment where information is sparse during the initial exploration phase (Fig.~\ref{fig:navigation}).

\textbf{Reverse Transport}: agents spawn inside a physical package and must exert collective force to push it toward a goal. Because one agent is barely able to move the mass alone, the team must synchronize their velocities to move the package effectively, receiving rewards proportional to the reduction in distance to the goal (Fig.~\ref{fig:reverse_transport}). 

\textbf{Football}: features a team of three agents playing against a programmed heuristic. Beyond coordinating an attack to score a goal (sparse reward), the random spawn locations force agents to switch between defensive positioning and counter-attacking. The challenge is amplified by the need to manage ball control and shooting lanes while competing against a consistent heuristic opponent (Fig.~\ref{fig:football}).

\textbf{Pressure Plate}: a logic-based task where three agents must navigate a locked door by utilizing two pressure plates. This necessitates a three-phase strategy: (i) one agent presses the first plate to allow the others to cross, (ii) those agents must then find and hold a second plate on the opposite side so the first agent can also cross, and (iii) all agents move to the final goal (Fig.~\ref{fig:pressure_plate}).

\textbf{Wind Flocking}: explores energy-efficient collective movement among heterogeneous agents exposed to a north-to-south wind. The agents must discover a formation where the larger agent shields the smaller one from the wind to optimize their performance and minimize energy expenditure (Fig.~\ref{fig:windflocking}).

\begin{figure}[htbp]
    \centering
    
    \begin{subfigure}[b]{0.45\textwidth}
        \centering
        \includegraphics[width=\linewidth]{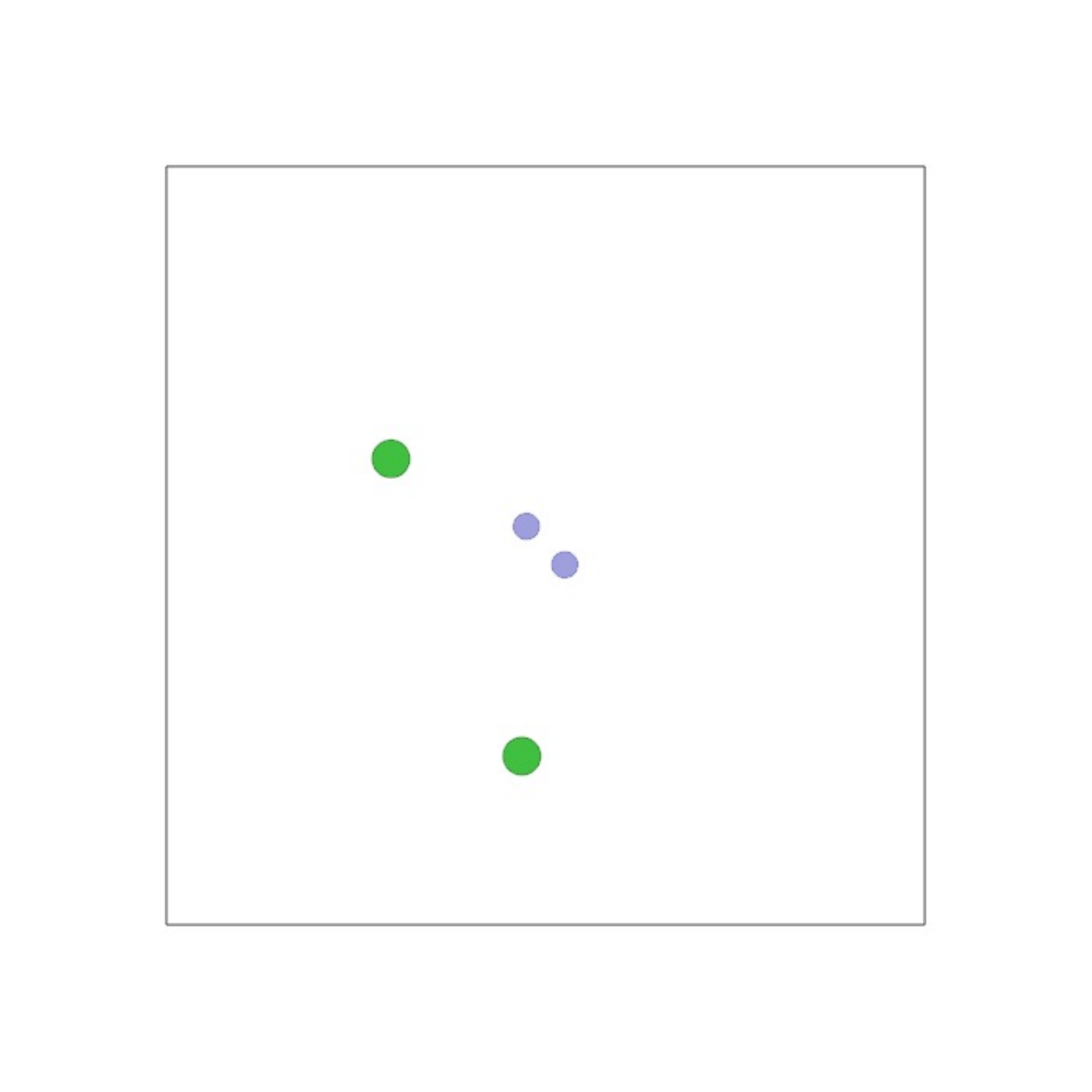} 
        \caption{\textbf{Dispersion}}
        \label{fig:dispersion}
    \end{subfigure}
    \hfill
    \begin{subfigure}[b]{0.45\textwidth}
        \centering
        \includegraphics[width=\linewidth]{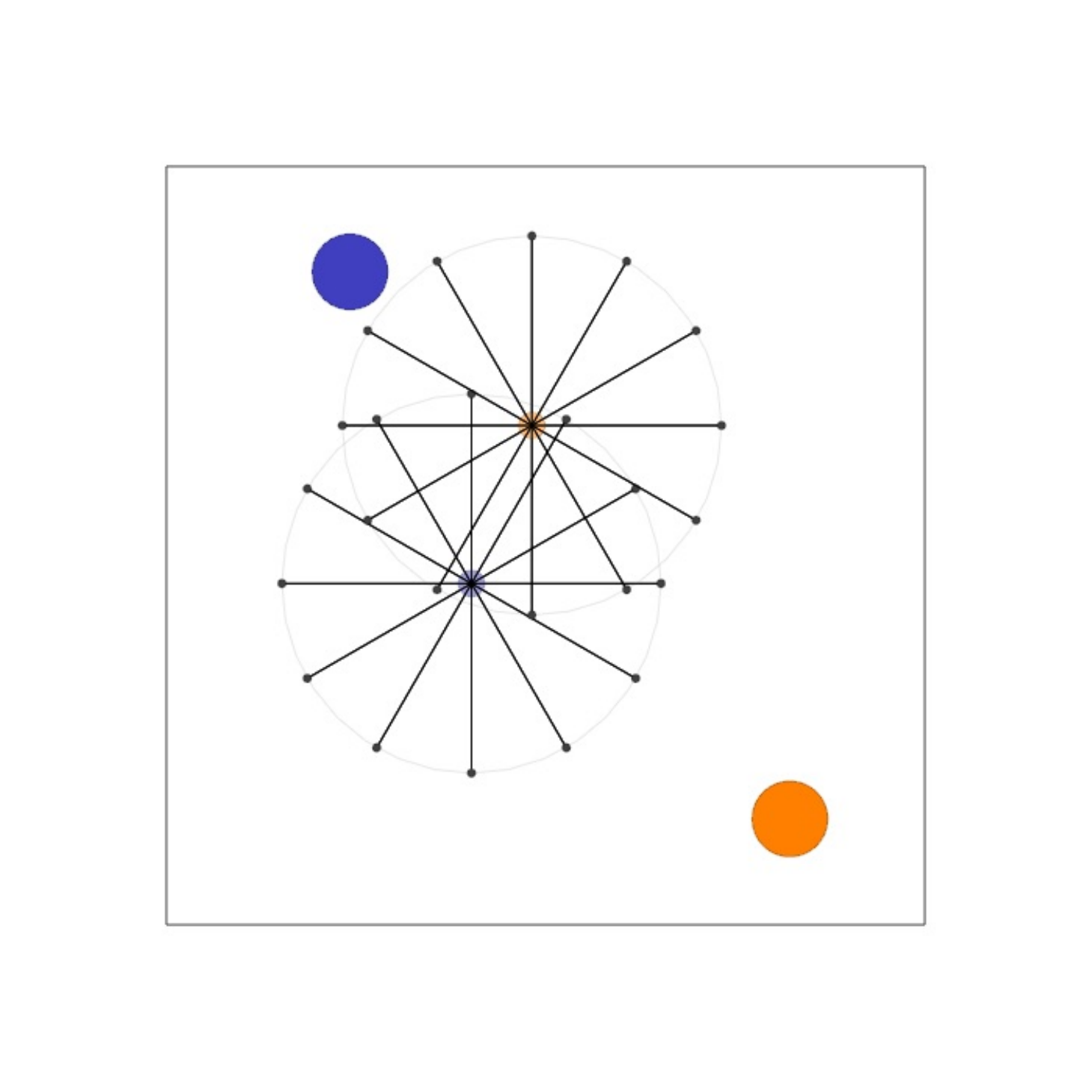}
        \caption{\textbf{Navigation}}
        \label{fig:navigation}
    \end{subfigure}
    
    \vspace{0.2cm} 
    
    \begin{subfigure}[b]{0.45\textwidth}
        \centering
        \includegraphics[width=\linewidth]{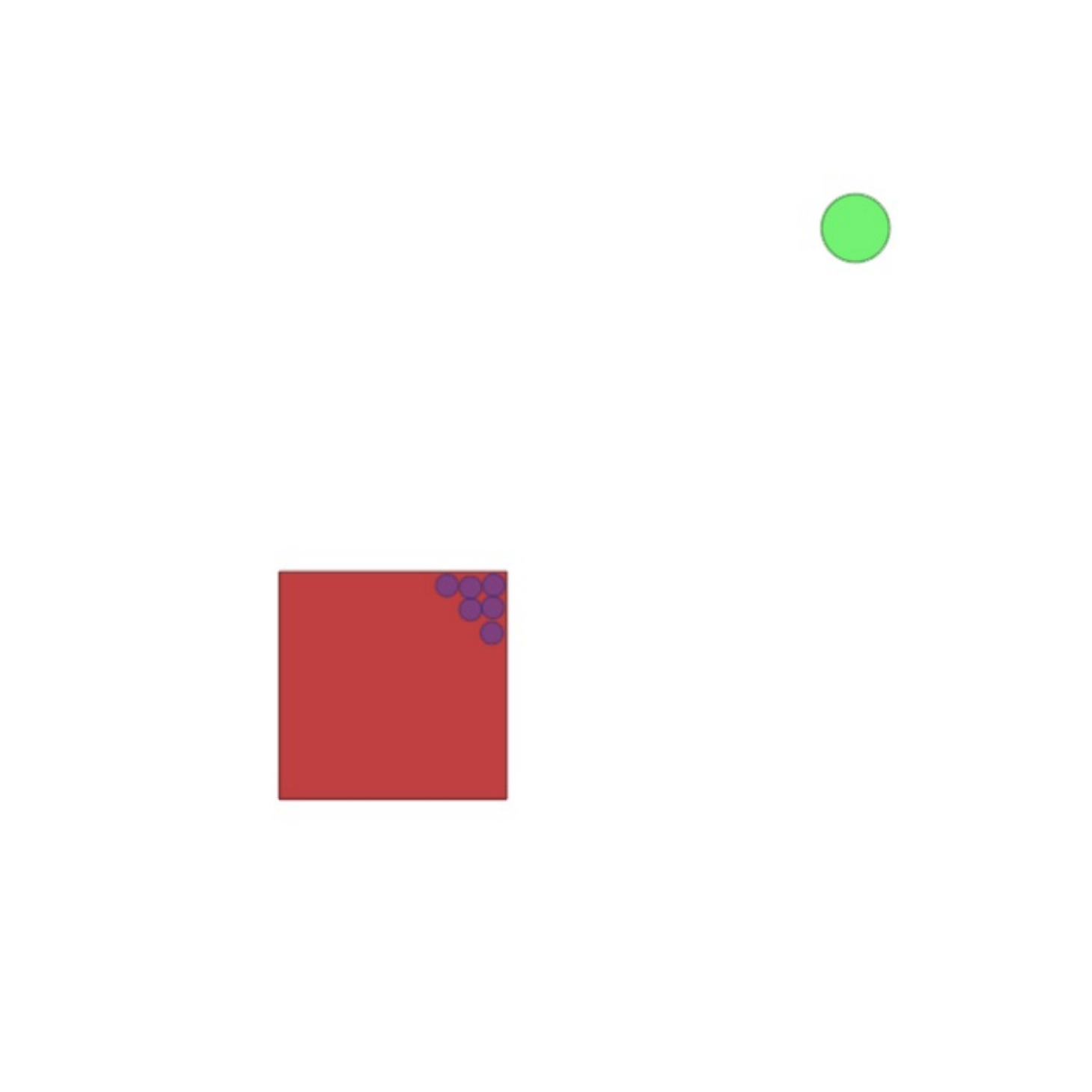}
        \caption{\textbf{Reverse Transport}}
        \label{fig:reverse_transport}
    \end{subfigure}
    \hfill
    \begin{subfigure}[b]{0.45\textwidth}
        \centering
        \includegraphics[width=\linewidth]{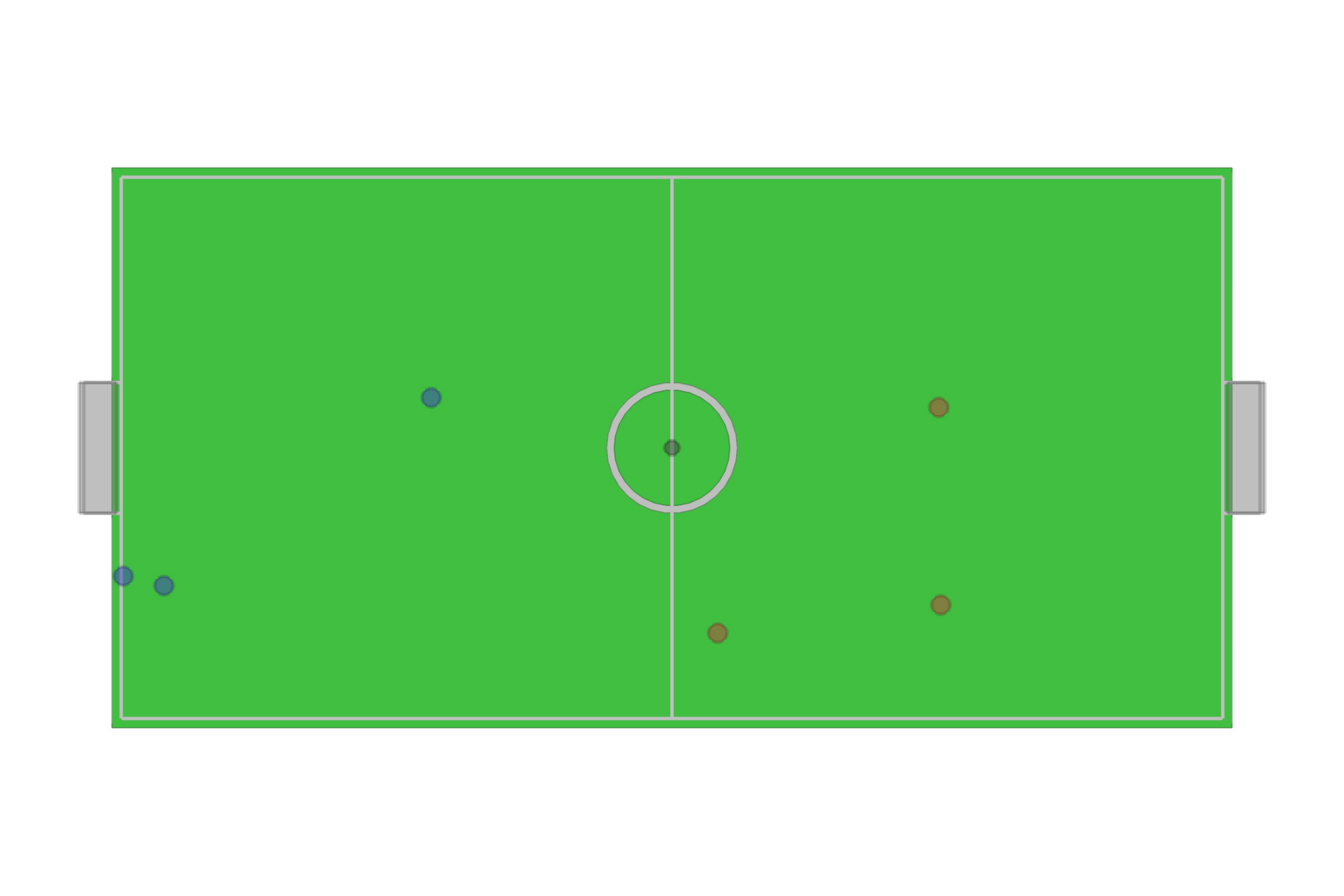}
        \caption{\textbf{Football}}
        \label{fig:football}
    \end{subfigure}
    \vspace{0.2cm} 
    
    \begin{subfigure}[b]{0.45\textwidth}
        \centering
        \includegraphics[width=\linewidth]{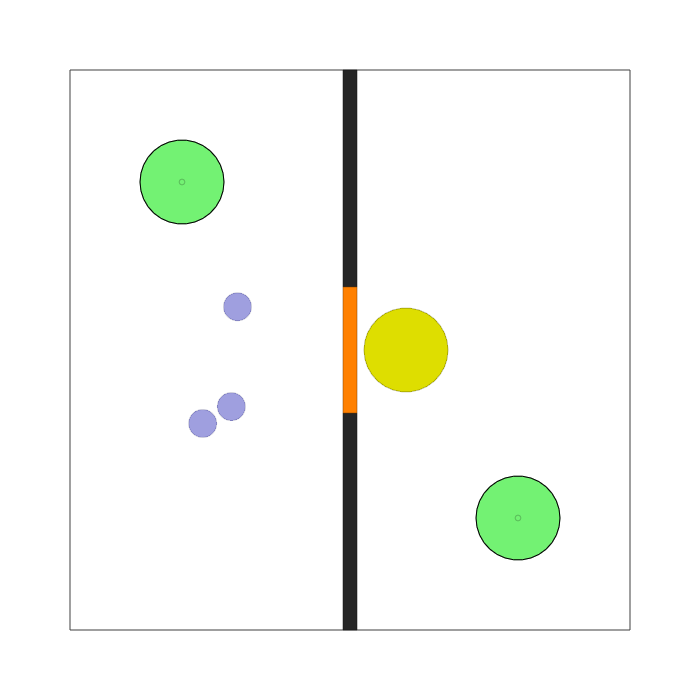}
        \caption{\textbf{Pressure Plate}}
        \label{fig:pressure_plate}
    \end{subfigure}
    \hfill
    \begin{subfigure}[b]{0.45\textwidth}
        \centering
        \includegraphics[width=\linewidth]{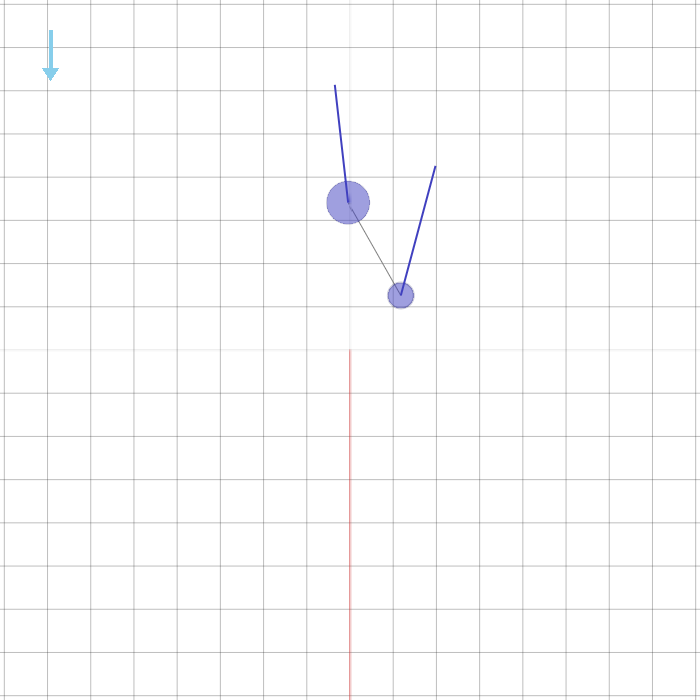}
        \caption{\textbf{Wind Flocking}}
        \label{fig:windflocking}
    \end{subfigure}
    
    \caption{Overview of the six tasks: (a) Dispersion, (b) Navigation, (c) Reverse Transport, (d) Football, (e) Pressure Plate, and (f) Wind Flocking.}
    \label{fig:six_tasks}
\end{figure}

\begin{table}[hpb]
\centering
\caption{Hyperparameters used in the experiments.}
\label{tab:all_hyperparameters}
\begin{tabular}{ll}
\toprule
\textbf{Hyperparameter} & \textbf{Value / Configuration} \\
\midrule
\multicolumn{2}{l}{\textbf{Shared Environment \& Execution Setup}} \\
\midrule
Total Training Steps                 & $10^{7}$ \\
Total Training Steps for Football    & $150\times 10^{6} $ \\
Parallel Environments                & 128 \\
Time steps for Navigation \& Dispersion         & 200 \\
Time steps for Reverse Transport \& Football    & 300 \\
Total Number of Mini-batches          & 8 \\
\midrule
\multicolumn{2}{l}{\textbf{Shared MAPPO \& Optimization Parameters}} \\
\midrule
Optimizer                        & Adam \\
Adam $\beta_1$                   & $0.9$ \\
Adam $\beta_2$                   & $0.999$ \\
Adam $\epsilon$                  & $1 \times 10^{-5}$ \\
Discount Factor ($\gamma$)       & $0.99$ \\
GAE Parameter ($\lambda$)        & $0.95$ \\
Clipping Parameter ($\epsilon$)  & $0.2$ \\
Entropy Coefficient              & $0.01$ \\
Value Function Coefficient       & $0.5$ \\
Max Gradient Norm                & $0.5$ \\
\midrule
\multicolumn{2}{l}{\textbf{Proposed Framework}} \\
\midrule
Homogeneous Network Size         & [128, 128] \\
Critic Hidden Dimension          & [128, 128] \\
Learning Rate                    & $6 \times 10^{-4}$\\
Transformer Layers (Blocks)      & $2$ \\
Attention Heads                  & $2$ \\
Embedding Dimension              & $64$ \\
MLP Hidden Dimension             & $256$ \\
LoRA Rank                        & $8$ \\
\midrule
\multicolumn{2}{l}{\textbf{DiCo Baseline \cite{bettini2024controlling}}} \\
\midrule
Homogeneous Network Size        & [128, 128] \\
Heterogeneous Network Size (Per-Agent)        & [128, 128] \\
Learning Rate                   & $1 \times10^{-4}$ \\
SND Values Tested (Grid Search) & $0.1, 0.2, 0.5, 0.9, 1.0, 1.1, 1.5, 2.0$ \\
Best SND for Navigation and Dispersion         & 1.0 \\
Best SND for Reverse Transport  & 0.1 \\
Best SND for Football  & 0.2 \\
\midrule
\multicolumn{2}{l}{\textbf{HyperMARL Baseline \cite{Tessera2024}}} \\
\midrule
Network Size (Hypernetwork)   & 64 \\
Network Size (Agents)         & [64, 64] \\
Learning Rate                 & $5 \times 10^{-4}$ \\
Embedding Dimension ($e_t^i$) & 8 \\
\midrule
\multicolumn{2}{l}{\textbf{CASH Baseline \cite{Fu2025}}} \\
\midrule
Policy Hidden Dimensions         & [128, 128] \\
Critic Hidden Dimension          & [128, 128] \\
GRU Hidden Dimension             & 128 (1 layer) \\
Learning Rate                    & $2 \times 10^{-3}$ \\
Hypernetwork Hidden Dimension    & $64$ \\
Hypernetwork Layers              & $4$ \\
Decoder Hidden Dimension         & $64$ \\
Use Two-Layer Decoder            & True \\
\bottomrule
\end{tabular}
\end{table}

For initial experimentation, we required around 500 compute hours using an NVIDIA GeForce RTX 2080 Ti GPU and an Intel(R) Xeon(R) Gold 6248R CPU @ 3.00GHz. For the final multi-seed evaluation, we invested additional 500 compute hours on the same hardware. All the hyperparameters used in the paper can be found in Table \ref{tab:all_hyperparameters}.

\section{More Results}\label{app:more_results}

\subsection{Average Episode Rewards for Main Experiments}
\label{app:meanreward}
For all reported results in the main text, we evaluated the models using the best-performing checkpoints saved during the training process. Fig.~\ref{fig:mean_rewards_all_tasks} illustrates the learning progress across our six evaluation environments: Navigation, Dispersion, Reverse Transport, Football, Pressure Plate, and Wind Flocking.
The plots compare our method against the four baselines: 

\textbf{Full Parameter-Sharing.} A homogeneous baseline with no behavior allocation, where all agents share the same policy. 

\textbf{HyperMARL \citep{Tessera2024}.} Allocates behaviors episode-wise at timestep $0$ via a hypernetwork conditioned on agent identities.

\textbf{Diversity Control  (\textbf{DiCo})~\citep{bettini2024controlling}} Operates on fixed agent indices using cross-agent SND as the constraint; for each task, we perform a grid search over target SND values and report the best-performing variant. 

\textbf{Capability-Aware Shared Hypernetworks (\textbf{CASH})~\citep{Fu2025}. } Allocates behaviors at every timestep via a hypernetwork conditioned on capabilities when available, or agent identities

In nearly all environments, our method and the majority of baselines exhibit stable convergence toward high mean rewards, demonstrating that training is well-configured. The exception is CASH (Fig.~\ref{fig:nav_mean} and~\ref{fig:dispersion_mean}), which exhibits ``shaky'' performance and, in some independent runs, fails to converge to a competitive reward  (Fig.~\ref{fig:reverse_transport_mean}).
We attribute this behavior to the policy brittleness inherent to its architecture. Because CASH requires re-querying the hypernetwork at every timestep to generate agent-specific weights, the resulting policy can be highly sensitive to minor fluctuations in the hypernetwork's output.
For DiCo, we evaluated several SND values and picked the one with best performance to ensure a fair comparison; the specific values used for each environment are detailed in Table~\ref{tab:all_hyperparameters}.

\begin{figure*}[htbp]
    \centering
    \begin{subfigure}[b]{0.48\textwidth}
        \centering
        \includegraphics[width=\textwidth]{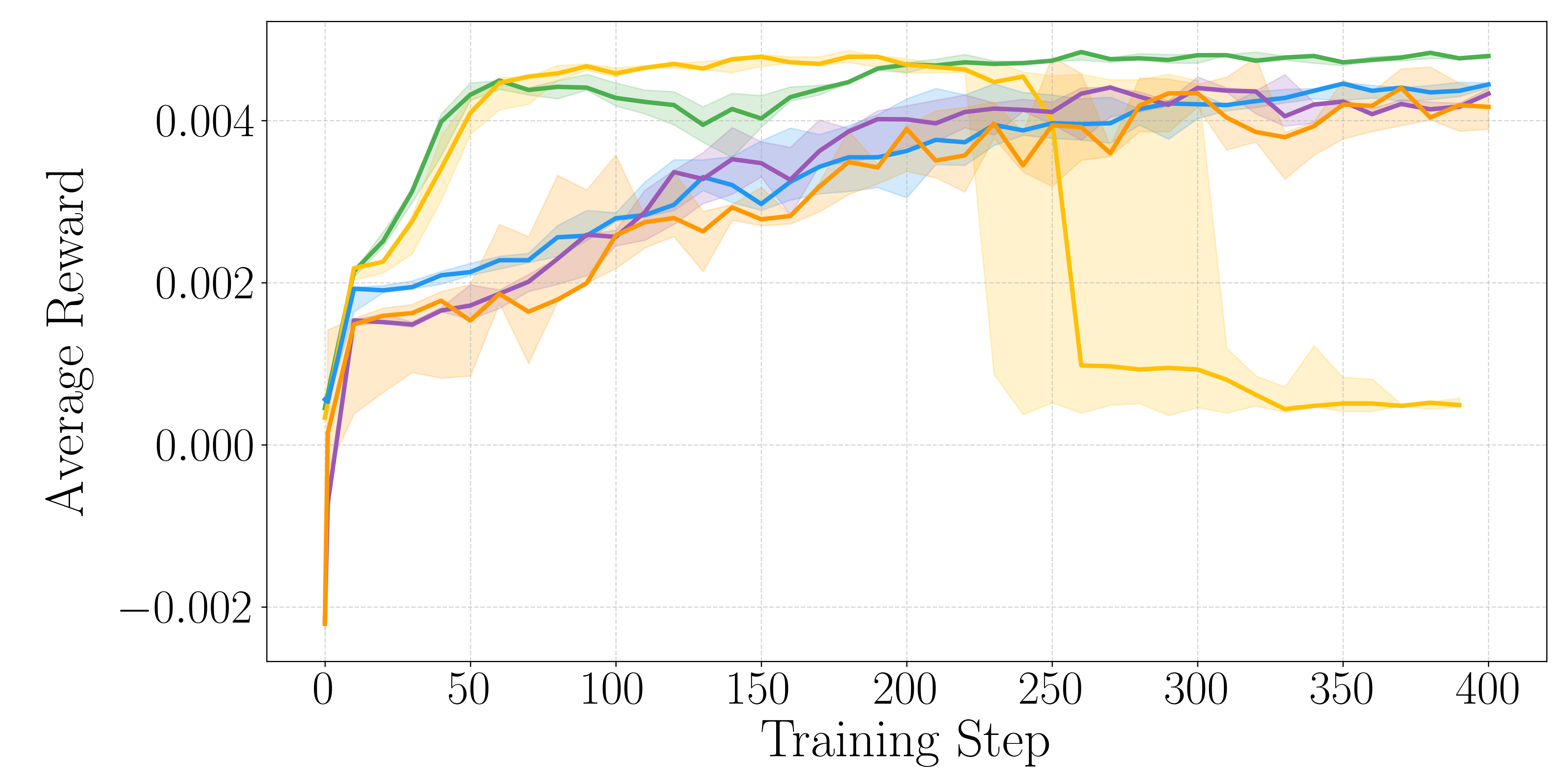}
        \caption{Navigation}
        \label{fig:nav_mean}
    \end{subfigure}
    \hfill
    \begin{subfigure}[b]{0.48\textwidth}
        \centering
        \includegraphics[width=\textwidth]{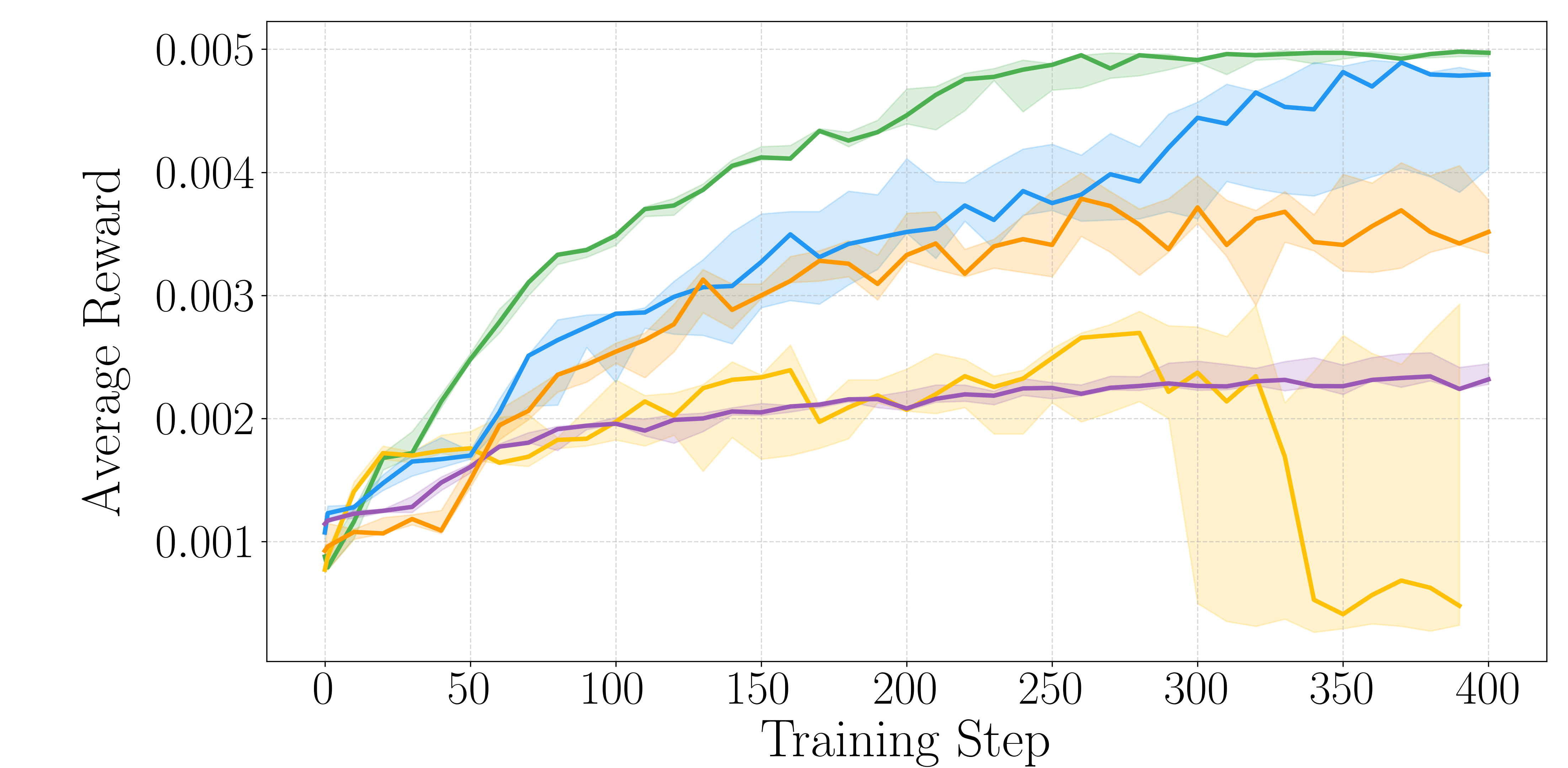}
        \caption{Dispersion}
        \label{fig:dispersion_mean}
    \end{subfigure}
    
    \vspace{0.4cm} 
    
    \begin{subfigure}[b]{0.48\textwidth}
        \centering
        \includegraphics[width=\textwidth]{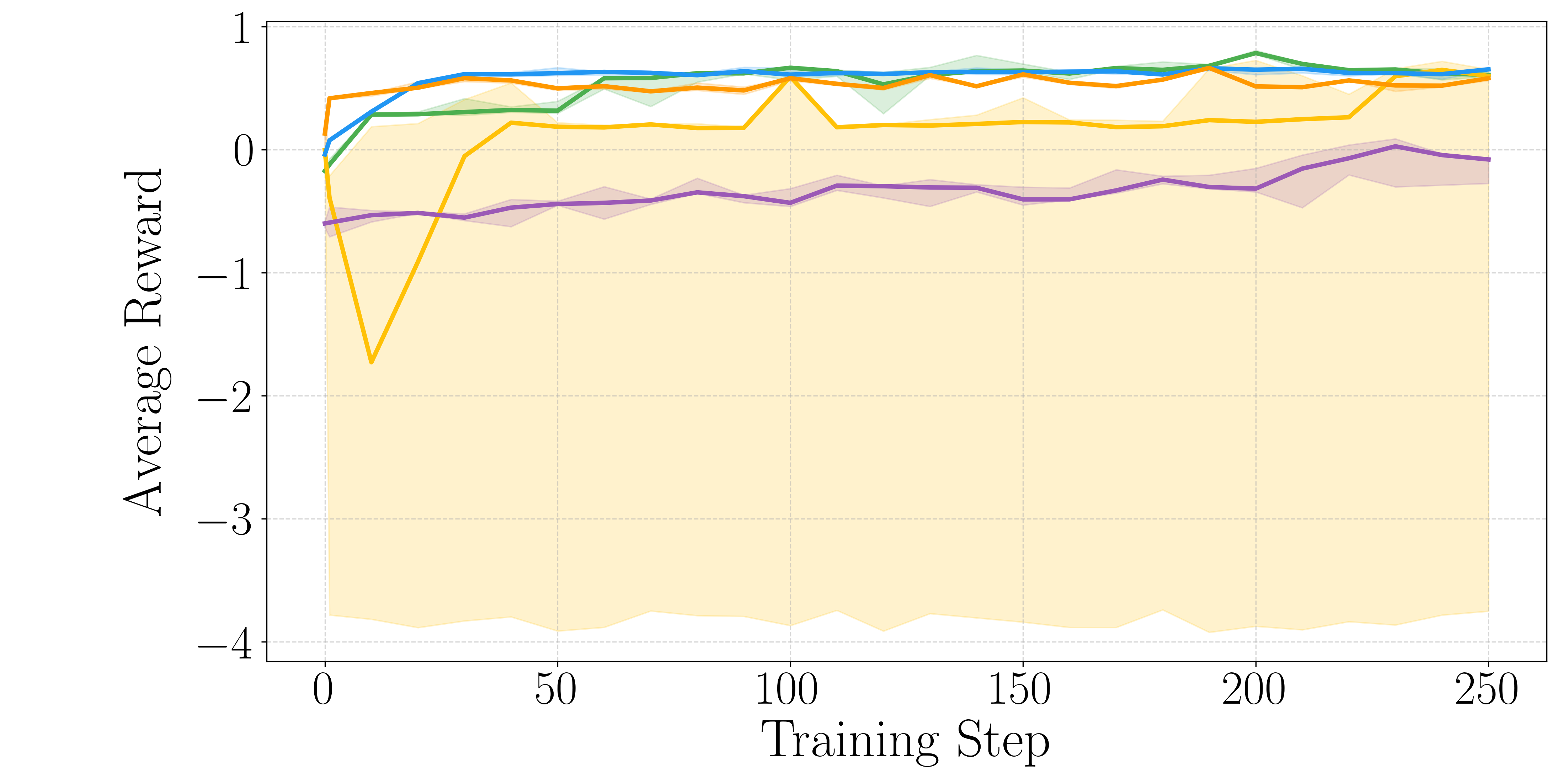}
        \caption{Reverse Transport}
        \label{fig:reverse_transport_mean}
    \end{subfigure}
    \hfill
    \begin{subfigure}[b]{0.48\textwidth}
        \centering
        \includegraphics[width=\textwidth]{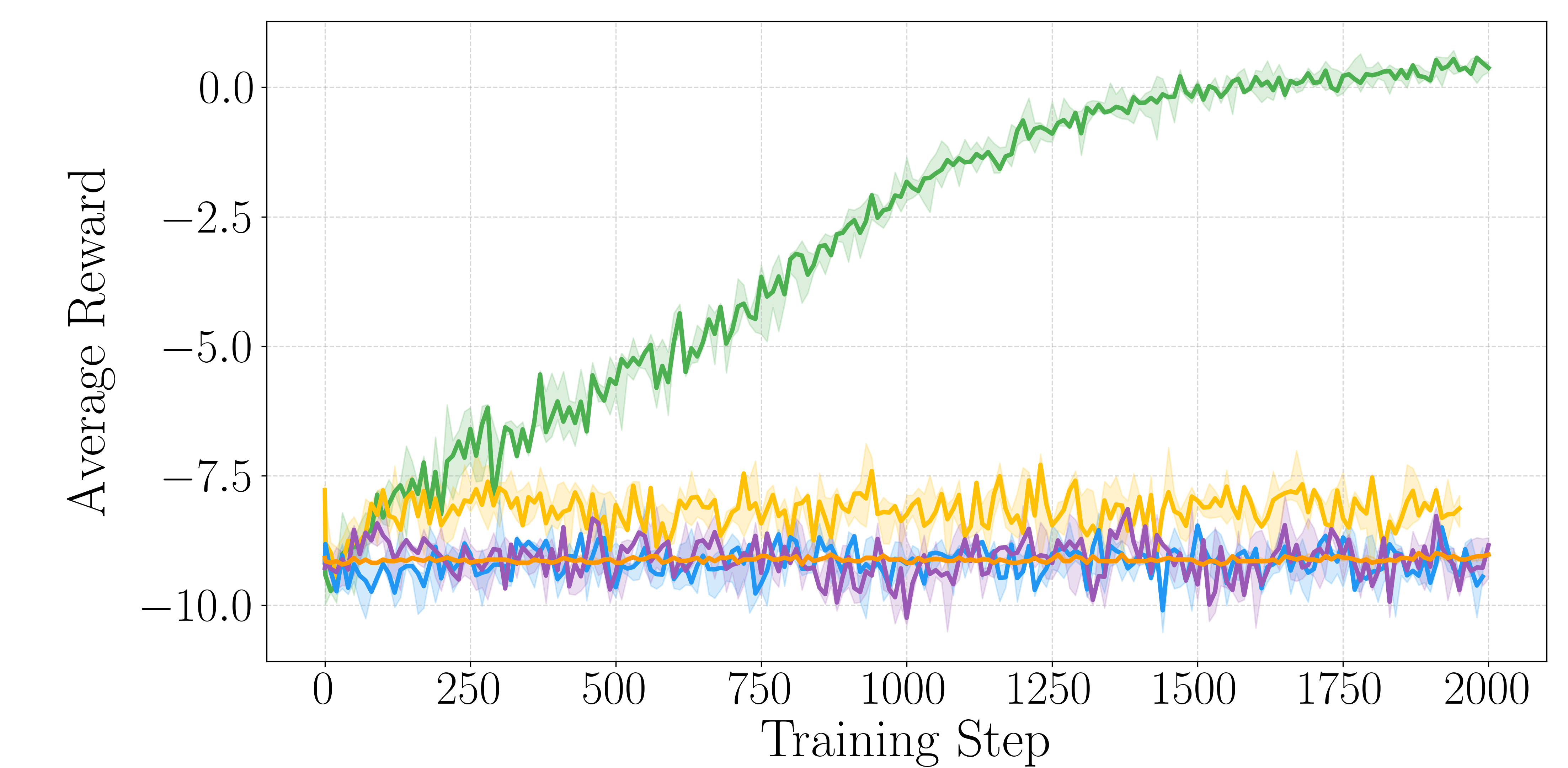} 
        \caption{Football}
        \label{fig:football_mean}
    \end{subfigure}
    
    \vspace{0.4cm} 
    
    \begin{subfigure}[b]{0.48\textwidth}
        \centering
        \includegraphics[width=\textwidth]{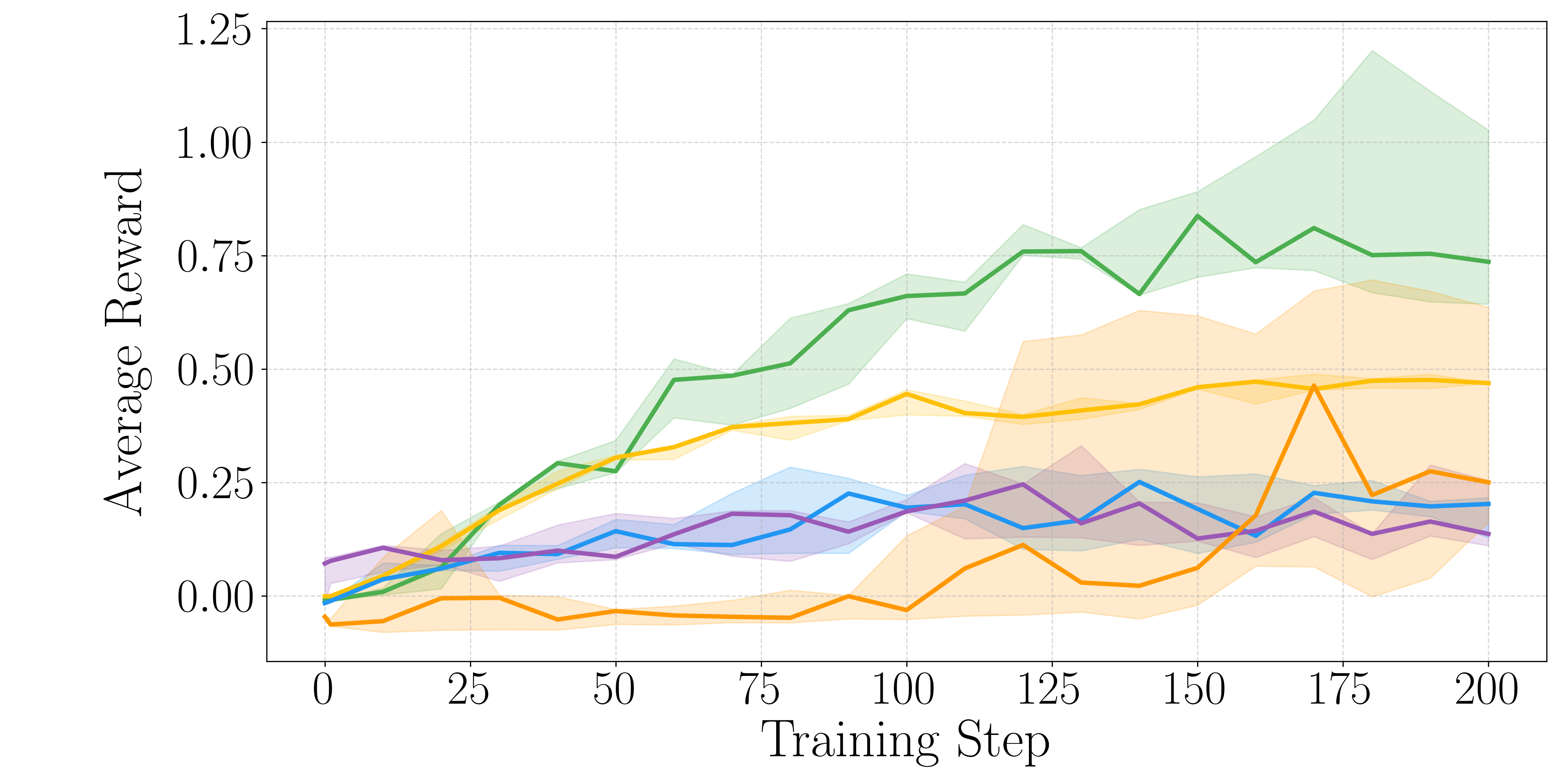} 
        \caption{Pressure Plate}
        \label{fig:pressure_plate_mean}
    \end{subfigure}
    \hfill
    \begin{subfigure}[b]{0.48\textwidth}
        \centering
        \includegraphics[width=\textwidth]{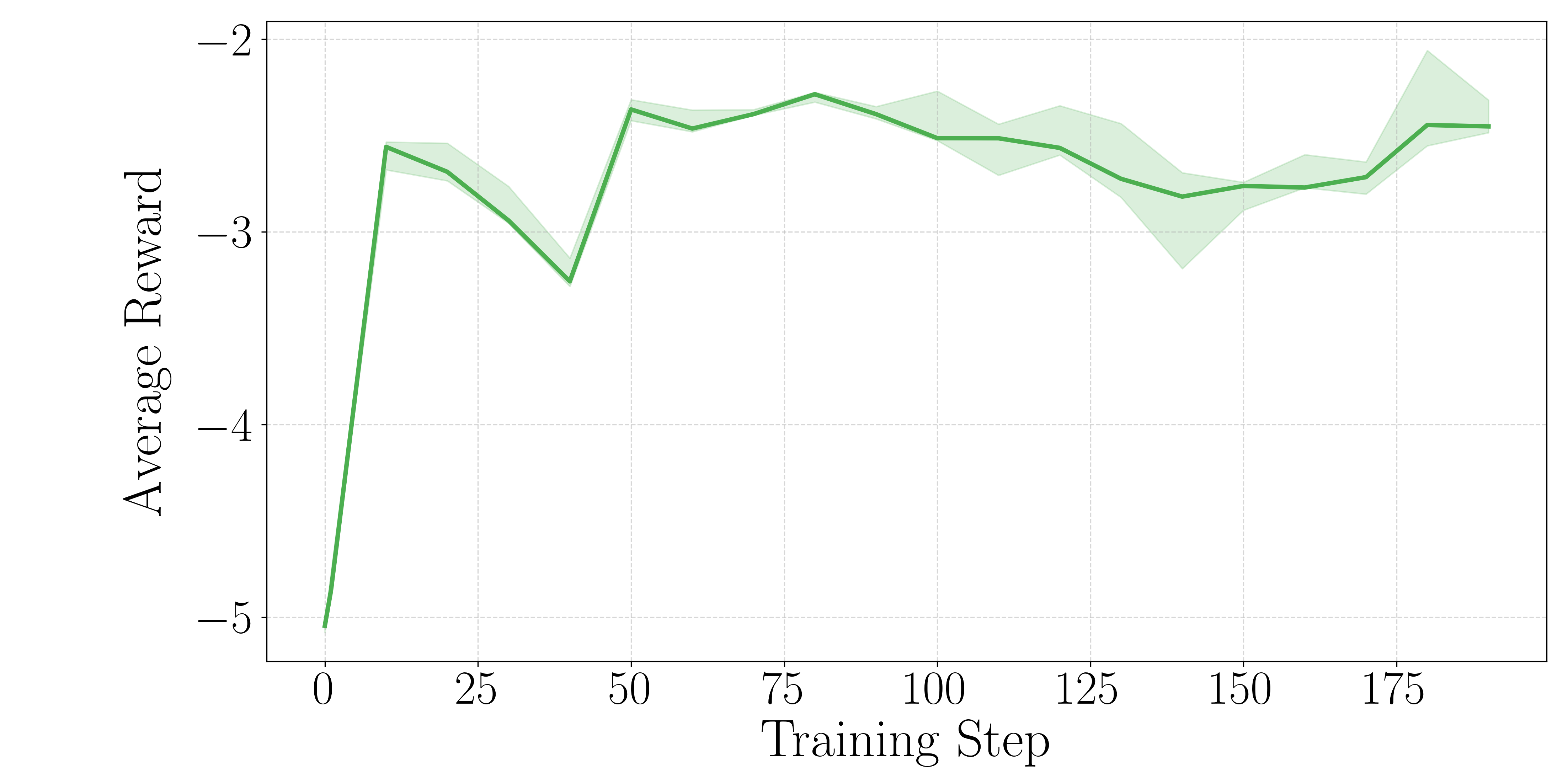} 
        \caption{Wind Flocking}
        \label{fig:wind_flocking_mean}
    \end{subfigure}
    
    \caption{\textbf{Comparison of mean rewards across all evaluation tasks.} Learning curves for all the tasks. Colors denote the different methods evaluated: Ours (green), Full Parameter-Sharing (purple), CASH ~\citep{Fu2025} (yellow),  HyperMARL ~\citep{Tessera2024} (orange), and DiCo ~\citep{bettini2024controlling} (blue). Performance is averaged over 5 seeds.}
    \label{fig:mean_rewards_all_tasks}
\end{figure*}

\subsection{LoRA Rank Ablation}
\label{app:lorarank}
We conducted an ablation study on the navigation task across five independent seeds to determine the optimal rank for the Low-Rank Adaptation (LoRA) modules. Based on these results, illustrated in Fig.~\ref{fig:lorarank}, we utilized a rank of $r=8$ for all subsequent experiments. This configuration proved to be the ``sweet spot'' for our architecture, providing sufficient expressivity while maintaining the stability of the hypernetwork during training.
In this sense, we recall that the dimensionality of the LoRA rank plays a critical role in the underlying optimization landscape. Restricting the hypernetwork to emit matrices $C_m \in \mathbb{R}^{r \times d}$ and $D_m \in \mathbb{R}^{d_a \times r}$, where $r \ll d$, significantly stabilizes optimization relative to generating full dense matrices.
\begin{figure}
    \centering
    \includegraphics[width=0.8\linewidth]{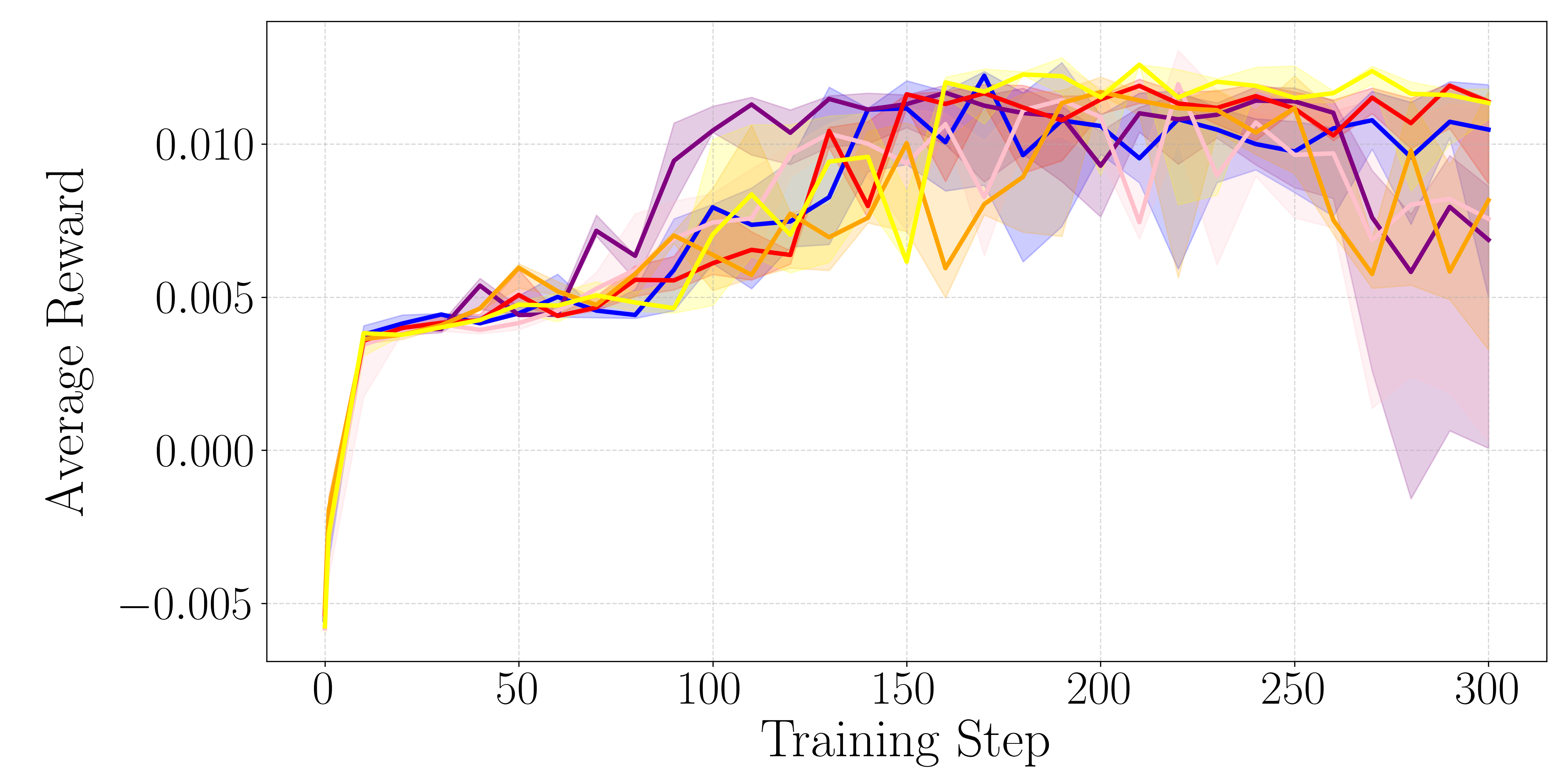}
    \caption{LoRA rank ablation in the navigation task. The plot illustrates the mean episode reward across ranks, evaluated over 5 seeds. Colors: Rank 2 (yellow), Rank 4 (orange), Rank 8 (red), Rank 16 (pink), Rank 32 (purple), and Rank 64 (blue).}
    \label{fig:lorarank}
\end{figure}

\section{Broader Societal Impact}\label{app:societal_impact}

The most direct beneficiaries of event-driven behavioral adaptation are safety-critical robotic teams operating in dynamic, partially observable environments---search and rescue, post-disaster reconnaissance, environmental monitoring, and inspection of hazardous infrastructure---where agents may fail mid-mission and the optimal allocation of behaviors changes abruptly as new information arrives. The agent-removal and unseen-event experiments in Section~\ref{sec:experiments} show that the team can absorb such disruptions without retraining, which is a precondition for deploying these systems outside the lab. Zero-shot generalization across team size, capability, and target diversity additionally removes the need to train a separate policy for each deployment configuration, lowering the compute and energy cost of multi-robot deployment, and exposing $\mathrm{NMD}_{\mathrm{des}}$ as an interpretable input scalar gives operators a post-hoc handle on team behavior without retraining. Beyond its direct technical contribution, treating behaviors as properties of the task rather than of agents carries a conceptual implication worth surfacing. Most prior MARL frameworks tie roles to agent identity, implicitly assuming that some agents are intrinsically suited to certain functions; in our framework, no agent is permanently the leader, the scout, or the helper, and any agent can instantiate any behavior the task requires at the moment it requires it. We see this as a more democratic model of cooperation, in which roles emerge from circumstance rather than from identity. 

On the negative side, improvements in cooperative MARL are dual-use: the same properties that benefit search-and-rescue can in principle be applied to coordinated autonomous swarms in military or surveillance contexts. We do not believe our contribution unlocks fundamentally new capabilities along this axis but we acknowledge that more reliable team coordination is a building block that can be repurposed. A separate concern, noted as a limitation in Section~\ref{sec:conc}, is that our hypernetwork is centralized and ingests observations from all agents, which raises both privacy considerations (when observations include sensor data from environments containing people) and security considerations (a single point of failure for the team's behavioral allocation); the decentralized extension discussed in Section~\ref{sec:conc} is therefore a meaningful mitigation as well as a methodological direction. 

\end{document}